\newcommand{\withproof}[1]{}
\renewcommand{\emph}[1]{{\it #1}}
\newif\ifcommentson\commentsonfalse
\newcommand{\commentTGW}[1]{\begin{center} \parbox{.9\textwidth}{\textbf{\textcolor{black}{Comment T.}} \textcolor{red}{#1 }}\end{center}}
\newcommand{\commentYK}[1]{\begin{center} \parbox{.9\textwidth}{\textbf{\textcolor{black}{Comment Y.}} \textcolor{red}{#1} }\end{center}}
\newcommand{\replyTGW}[1]{\begin{center} \parbox{.8\textwidth}{\textbf{Reply T.} \textcolor{blue}{#1} }\end{center}}
\newcommand{\replyYK}[1]{\begin{center} \parbox{.8\textwidth}{\textbf{Reply Y.} \textcolor{blue}{#1} }\end{center}}
\newcommand{\commentT}[1]{\marginpar{\footnotesize \color{red} {\bf T:} \textsf{\scriptsize #1}}}
\newcommand{\commentY}[1]{\marginpar{\footnotesize \color{red} {\bf Y:} \textsf{\scriptsize #1}}}
\newcommand{\replyT}[1]{\marginpar{\footnotesize \color{red} {\bf T:} \textsf{\scriptsize #1}}}
\newcommand{\replyY}[1]{\marginpar{\footnotesize \color{red} {\bf Y:} \textsf{\scriptsize #1}}}
\newcommand{\commentTGW}[1]{}
\newcommand{\commentYK}[1]{}
\newcommand{\replyTGW}[1]{}
\newcommand{\replyYK}[1]{}
\newcommand{\commentT}[1]{}
\newcommand{\commentY}[1]{}
\newcommand{\replyT}[1]{}
\newcommand{\replyY}[1]{}
\newcommand{\withpossible}[1]{}
\newcommand{\hacks}[1]{#1} 
\newtheorem{theorem}{Theorem}
\newtheorem{proposition}[theorem]{Proposition}
\newtheorem{corollary}[theorem]{Corollary}
\newtheorem{definition}{Definition}
\newtheorem{example}{Example}
\newenvironment{proof}{\noindent\textbf{Proof: }}{\hspace*{\fill} $\Box$\\}
\newcommand{\leakiEst}{\textsf{leakiEst}}
\newcommand{\lpsolve}{\textsf{lp\_solve}}
\newcommand{\X}{\mathcal{X}}
\newcommand{\Y}{\mathcal{Y}}
\renewcommand{\O}{\mathcal{O}}
\newcommand{\Obs}{\mathit{Obs}}
\newcommand{\Z}{\mathcal{Z}}
\newcommand{\Int}{\mathit{Int}}
\newcommand{\sep}{\mathit{sep}}
\newcommand{\oap}[2]{\overline{#1}\langle #2\rangle}
\newcommand{\define}{\stackrel {\rm def} =}
\newcommand{\chan}{\mathcal{K}}
\renewcommand{\S}{\mathcal{S}}
\newcommand{\SDS}{\S_{\mathit{DS}}}
\newcommand{\SFS}{\S_{\mathit{FS}}}
\newcommand{\SFI}{\S_{\mathit{FI}}}
\newcommand{\comps}[1]{\mathit{Comp}_{#1}}
\newcommand{\compd}{\mathbin{\times}}
\newcommand{\I}{\mathcal{I}}
\newcommand{\C}{\mathcal{SC}}
\renewcommand{\L}{\mathcal{L}}
\newcommand{\MC}{\mathcal{MC}}
\title{Quantitative Information Flow \\ for Scheduler-Dependent Systems
\thanks{This work has been partially supported by the project ANR-12-IS02-001 PACE, by the INRIA Equipe Associ\'{e}e PRINCESS, by the INRIA Large Scale Initiative CAPPRIS, and by  EU grant agreement no. 295261 (MEALS). The work of Yusuke Kawamoto has been supported by a postdoc grant funded by the IDEX Digital Society project.}}
\author{%
Yusuke Kawamoto
\institute{\mbox{ }Inria Saclay \& LIX, \'{E}cole Polytechnique, France}
\and
Thomas Given-Wilson
\institute{Inria, France}
}
\begin{document}
\pagestyle{plain} 

\maketitle  

\begin{abstract}
Quantitative information flow analyses measure how much information on secrets is leaked by publicly observable outputs.
One area of interest is to quantify and estimate the information leakage of composed systems.
Prior work has focused on running disjoint component systems in parallel and reasoning about the leakage compositionally, but has not explored how the component systems are run in parallel or how the leakage of composed systems can be minimised.
In this paper we consider the manner in which parallel systems can be
combined or scheduled. This considers the effects of scheduling channels where
resources may be shared, or whether the outputs may be incrementally observed.
We also generalise the attacker's capability, of observing outputs of the system, 
to consider attackers who may be imperfect in their observations,
e.g.~when outputs may be 
confused with one another, or when assessing the time taken for an output to appear.
Our main contribution is to present how scheduling and observation effect information 
leakage properties.
In particular, that scheduling can hide some leaked information from 
perfect observers, while some scheduling may reveal secret information that 
is hidden to imperfect observers.
In addition we present an algorithm to construct a scheduler that minimises 
the min-entropy leakage and min-capacity in the presence of any observer.
%
\end{abstract}

\renewcommand{\arraystretch}{1.5}

\section{Introduction}
\label{sec:intro}

Preventing the leakage of confidential information is an important goal in research of information security.
When some information leakage is unavoidable in practice, the next step is to quantify and reduce the leakage.
Recently theories and tools on quantitative information flow have been developed using information theory
to address these issues~\cite{Clark:01:QAPL,Boreale:09:InfComput,Koepf:07:CCS,Chatzikokolakis:08:IC,Smith:09:FOSSACS,Boreale:11:FOSSACS,ChothiaKawamoto2013,ChothiaKN14:esorics}.
The common approach is to model systems as \emph{information-theoretic channels} that receive secret input and returns observable output.

One area of interest is to quantify and estimate the information leakage of composed systems.
When composing systems the manner of reasoning about their behaviour is
non-trivial and is complicated by many factors.
One of the first approaches is to consider the \emph{(disjoint) parallel composition}, that is, simply running the component systems independently and regarding them as a single composed system.
This approach provides some general behaviour and reasoning about the whole composed system, as shown in the research of quantitative information flow with different operational scenarios of attack~\cite{barthe2011information,espinoza2013min,KawamotoCP14:qest}.
However, the parallel composition approach is coarse-grained and abstracts many of the channels'
behaviours that may lead to changes in information leakage.
Although this approach provides useful results on the bounds of possible leakage,
it does so under the assumption that the component channels are executed independently and 
observed separately.
That is, their outputs can always be linked to the disjoint component channels,
and that both their outputs are observed simultaneously and without any interleaving or reflection of
how the component channels achieved their outputs.

Here we take a more fine-grained approach where we consider that channels may provide a
sequence of observable actions. Thus, a channel may be observed to output a
sequence of actions, or the passage of time may be observed to pass between the
initiation of the channel and a final output.
This captures more mechanics of real world systems and allows for greater
refined reasoning about their behaviour.

Such sequences of observable actions also allow a more subtle approach to combining 
channels in parallel. Rather than simply taking both outputs to
appear together at the termination of their operations, observations can be made of
the sequence in which the outputs appear. Such a combination of channels
becomes parametrised by a \emph{scheduler}, that informs on how to combine the observable
sequences of actions into a single sequence. This can then represent very direct
behaviour such as scheduling properties of a shared CPU, or abstract behaviours
such as routing properties, vote counting, etc.

The other novel approach presented here is the refinement of the attacker's capability of observing the outputs of systems.
We model attackers that may have imperfect observability:
they may not accurately detect differences in outputs, or may do so only probabilistically.
This captures, for example,
the situation 
where the attacker may be blind to some internal behaviour that other agents can detect.
In this paper such imperfect observations are modeled using 
what we call \emph{observer channels}.
This formalisation enables us to consider a large class of observers, including \emph{probabilistic observers}, which have never been considered in the previous studies on quantitative information flow.

These refinements to composing information-theoretic channels allow us to reason
about behaviours that may be obvious, but not captured by previous approaches.
In this paper we present three kinds of results regarding the effect of leakage properties
due to the considering of schedulers and observers.
First, since scheduled composition can alter the leakage relative to the parallel composition,
we present theorems for detecting when a scheduled composition does not alter the relative information leakage.
This means some preliminary analysis may be sufficient to determine when scheduled composition
may be worthy of further consideration.
Second, scheduled composition can leak more or less information than the parallel composition depending on
the properties of the channels and the power of the observer.
Although the potential effect on leakage is dependent upon many factors, we present results
that determine an upper bound for the leakage of a schedule-composed channel.
Third, we present results for finding a scheduler that minimises the min-entropy leakage and min-capacity 
in the presence of any observer.
We present how to construct such a scheduler by solving a linear programming problem.

In addition we evaluate our model and results with some simple yet intuitive examples,
such as mix networks for voter anonymity, and side-channel attacks against cryptographic
algorithms.
We provide an implementation that can be used to calculate the behaviours of information-theoretic channels, schedulers, and observers as presented here.
The implementation is available online~\cite{evils:www}, which requires the libraries \leakiEst{} tool~\cite{chothia13:cav} 
and the linear programming system \lpsolve{}~\cite{lpsolve}.

The rest of the paper is structured as follows.
Section~\ref{sec:preliminaries} recalls the definitions of information-theoretic channels and measures of information leakage.
Section~\ref{sec:comp-view} defines traces, systems and channel compositions, and shows examples of schedulers.
Section~\ref{sec:observed-leak} introduces the notion of generalised observers and defines the observed leakage.
Section~\ref{sec:main} presents our main results in a general manner.
Section~\ref{sec:eval} applies these results to well known problems.
Section~\ref{sec:related} discusses some related work.
Section~\ref{sec:conclude} draws conclusions and discusses future work. 
All proofs can be found in~\cite{KawamotoGivenWilson:15:HAL}.

\hacks{\vspace{-0.3cm}}
\section{Preliminaries}
\label{sec:preliminaries}
\vspace{-0.2cm}
\subsection{Information-Theoretic Channel}
\label{ssec:channel}

Systems are modeled as \emph{information-theoretic channels} to quantify information leakage using information theory.
A channel $\chan$ is defined as a triple $(\X, \Y, C)$ consisting of
a finite set $\X$ of secret input values,
a finite set $\Y$ of observable output values,
and a \emph{channel matrix} $C$ each of whose row represents a probability distribution;
i.e., for all $x \in \X$ and $y \in \Y$, $0 \le C[x, y] \le 1$ and $\sum_{y' \in \Y} C[x, y'] = 1$.
For each $x \in \X$ and $y \in \Y$, $C[x, y]$ is a conditional probability
$p(y|x)$ of observing $y$ when the secret of the system is $x$.
We assume some secret distribution $\pi$ on $\X$, which is also called a \emph{prior}.
Given a prior $\pi$ on $\X$, the joint distribution of having a secret $x \in \X$ and an observable $y \in \Y$ is defined by $p(x, y) = \pi[x] C[x, y]$.

\subsection{Quantitative Information Leakage Measures}
\label{subsec:info-leak}

In this section we recall the definitions of two popular quantitative information leakage measures.

Mutual information is a leakage measure based on the Shannon entropy of the secret  distribution.
\begin{definition} \label{def:MI} \rm
Given a prior $\pi$ on $\cal X$ and a channel $\chan = ({\cal X}, {\cal Y}, C)$,
the \emph{mutual information} $\I(\pi, \chan)$ w.r.t. $\pi$ and $\chan$ is defined by:
\vspace{-0.3cm}
\[
\I(\pi, \chan) = \sum_{x \in {\cal X}, y \in {\cal Y}}
\pi[x] C[x, y] \log\left( \frac{ C[x, y] }{ \sum_{y' \in {\cal Y}} C[x, y'] } \right)
\text{.}
\]
Then the \emph{Shannon's channel-capacity} $\C(\chan)$ of a channel $\chan$ is given by $\displaystyle \max_\pi \I(\pi, \chan)$ where $\pi$ ranges over all distributions on $\X$.
\end{definition}

Min-entropy leakage quantifies information leakage under single-attempt guessing attacks~\cite{Braun:09:MFPS,Smith:09:FOSSACS}.
\begin{definition} \label{def:MEL} \rm
Given a prior $\pi$ on $\cal X$, and a channel $\chan = ({\cal X}, {\cal Y}, C)$,
the \emph{prior vulnerability} $V(\pi)$ and the \emph{posterior vulnerability} $V(\pi, \chan)$ are defined respectively as 
\[
V(\pi)\!=~\displaystyle\max_{x \in {\cal X}} \pi[x]
~~~\mbox{ and }~~~
V(\pi, \chan)\!=~\displaystyle\sum_{y \in {\cal Y}} \max_{x \in {\cal X}} \pi[x] C[x, y]
\text{.}
\]
Then the \emph{min-entropy leakage} $\L(\pi, \chan)$ and the \emph{min-capacity} $\MC(\chan)$ are defined by:
\[
\L(\pi, \chan)\!=~ -\log V(\pi) + \log V(\pi, \chan)
~~~\mbox{ and }~~~
\MC(\chan)\!=~ \displaystyle\sup_{\pi} \L(\pi, \chan)
\text{.}
\]
\end{definition}

\section{Information Leakage of Scheduler-Dependent Systems}
\label{sec:comp-view}

\subsection{Traces and Systems}
\label{ssec:trace}

In general the output of an information-theoretic channel can be defined in many different ways.
In this work we consider traces, or sequences of actions, as observable values.
Assume a countable set of {\em names} denoted $m,m',m_1,m_2,\ldots$ and a
countable set of {\em values} $v,v_1,v',\ldots$.
We define an {\em action} 
by $\mu,\alpha,\beta \mathbin{::=} \tau \mid \oap m v\;$.
Here $\tau$ denotes the traditional \emph{silent} or \emph{internal} action that
contains no further information.
The {\em output} action $\oap m v$ can be considered to exhibit some value $v$ via
some named mechanism $m$.
In concurrency theory the output action typically refers to the the named mechanism
as a \emph{channel name}, which is distinct from the notion of information-theoretic channel
used here.
Here the output action is used in a more general sense, in that $\oap m v$
exhibits some value $v$ such as runtime measured via mechanism $m$.
For example, $v$ could be runtime, electronic power usage or other value determined by the input,
and $m$ could be via direct communication/circuitry, indirect side effects, or
any other means.

A \emph{trace} is defined to be a sequence of actions of the form $\mu_1.\mu_2.\ldots.\mu_i$.
The notation $\alpha\in\mu_1.\mu_2.\ldots.\mu_i$ denotes that there exists a $j \in \{1, 2, \ldots, i \}$ such that $\mu_j=\alpha$.
Similarly a sequence of $i$ actions $\mu$ can be denoted $\mu^i$, and an empty sequence of actions
by $\emptyset$.
A \emph{system} is modeled as an information-theoretic channel $(\X, \Y, C)$ where $|\X|$ is finite and the set $\Y$ of observables is a finite set of traces.

\subsection{Scheduled Composition}
\label{ssec:composition}

In this section we model scheduler-dependent systems by introducing the notion of a \emph{scheduled composition} of information-theoretic channels, which interleaves outputs from different channels.

In~\cite{KawamotoCP14:qest} the \emph{parallel composition} $\chan_1 \compd \chan_2$ of two component channels $\chan_1$ and $\chan_2$ is defined as a channel that outputs ordered pairs consisting of the outputs of the two component channels.
That is, given two component channels $\chan_1 = (\X_1,\Y_1,C_1)$ and $\chan_2 = (\X_2,\Y_2,C_2)$,
the outputs of their parallel composition range over the ordered pairs $(y_1,y_2)$ for all $y_1\in\Y_1$ and $y_2\in\Y_2$.
This composition can be modeled using a scheduler that allows $\chan_1$ to perform the whole sequence $y_1$ of actions and some action $\sep \not\in \Y_1\cup\Y_2$ (for separating $y_1$ from $y_2$) before $\chan_2$ performs the actions in $y_2$.%
\footnote{Formally, we introduce $\chan_{\sep} = (\!\{\sep\}\!, \{\sep\}\!, (1))$ to consider the sequential execution of $\chan_1$, $\chan_{\sep}$ and $\chan_2$ in this order.} 
In this setting we can recognise which component channel each output of the composed channel came out of.

In this paper we consider more fine-grained schedulers that may allow $\chan_2$ to perform some actions before $\chan_1$ completes the whole sequence of actions.
To model such schedulers, we define the set of possible interleaving of two traces that  preserves the orders of occurrences of actions in the traces.

\begin{definition}[Interleaving of traces] \rm
\label{def:interleaving}
Let us consider two traces $y_1$ of the form $\alpha_1.\allowbreak \alpha_2.\ldots.\alpha_k$ and $y_2$ of the form $\beta_1.\beta_2.\ldots.\beta_l$.
The \emph{interleaving} $\Int(y_1, y_2)$ of $y_1$ and $y_2$ is the set of all traces of the form $\mu_1.\mu_2.\ldots.\mu_{k+l}$ s.t., 
for two sequences of distinct integers 
$1 \le i_1 < i_2 < \ldots < i_k \le k+l$ and
$1 \le j_1 < j_2 < \ldots < j_l \le k+l$,
we have $\mu_{i_m} = \alpha_m$ for all $m = 1, 2, \ldots, k$ and $\mu_{j_m} = \beta_m$ for all $m = 1, 2, \ldots, l$.
\end{definition}

\begin{definition} \rm
For two sets $\Y_1, \Y_2$ of observables, the \emph{interleaving $\Int(\Y_1, \Y_2)$ over $\Y_1$ and $\Y_2$} is defined by
$
\Int(\Y_1, \Y_2) =
\bigcup_{y_1\in\Y_1, y_2\in\Y_2} \Int(y_1, y_2)
\texttt{.}
$
The definition of interleaving is extended from two traces to $n$ traces as follows:
$\Int(y_1, \allowbreak y_2, \ldots, y_n) = \bigcup_{y' \in \Int(y_2, \ldots, y_n)} \Int(y_1, y')$.
For $n$ sets $\Y_1, \Y_2, \dots, \Y_n$ of observables, the interleaving $\Int(\Y_1, \Y_2, \dots, \Y_n)$ is defined analogously.
\end{definition}

\begin{figure*}[t]\label{fig:compositions}
\begin{center}
\hspace{-1.0pt}%
\subfloat[][Parallel composition]{
\begin{picture}(110, 34)
 \put( 37, 45){$C_1 \compd C_2$}
 \thicklines \thicklines
 \put( 35, 17){\framebox(40,18){$C_1$}}
 \put( 35, -9){\framebox(40,18){$C_2$}}

 \linethickness{1.4pt}
 \put(   3,  26){\vector(  1,  0){28}}
 \put(   3,  00){\vector(  1,  0){28}}
 \put(  77,  26){\vector(  1,  0){25}}
 \put(  77,  00){\vector(  1,  0){25}}
 \thinlines \thinlines
 \put(  10,  30){$X_1$}
 \put(  10,    4){$X_2$}
 \put(  85,  30){$Y_1$}
 \put(  85,    4){$Y_2$}
 \put(29,-13){\dashbox{1.0}(51,53){}}
\end{picture}
\label{fig:composition-separated}
}
~~~
\subfloat[][Observation of a scheduled composition]{
\begin{picture}(210, 34)
 \put( 45, 45){$\comps{\S}(C_1, C_2)$}
 \thicklines \thicklines
 \put( 35, 17){\framebox(40,18){$C_1$}}
 \put( 35, -9){\framebox(40,18){$C_2$}}
 \put(105,-10){\framebox(20,46){$\S$}}
 \put(155,-10){\framebox(20,46){$\Obs$}}

 \linethickness{1.4pt}
 \put(   3,  26){\vector(  1,  0){28}}
 \put(   3,  00){\vector(  1,  0){28}}
 \put(  77,  26){\vector(  1,  0){25}}
 \put(  77,  00){\vector(  1,  0){25}}
 \put( 127,  15){\vector(  1,  0){25}}
 \put( 177,  15){\vector(  1,  0){25}}
 \thinlines \thinlines
 \put(  10,  30){$X_1$}
 \put(  10,    4){$X_2$}
 \put(  85,  30){$Y_1$}
 \put(  85,    4){$Y_2$}
 \put( 136,  20){$Y$}
 \put( 185,  20){$Z$}
 \put(29,-13){\dashbox{1.0}(101,53){}}
\end{picture}
\label{fig:composition-sheduled}
}%
\vspace{-0.1cm}
\caption{Parallel composition and scheduled composition}
\label{fig:two-composition}
\hacks{\vspace{-0.5cm}}
\end{center}
\end{figure*}
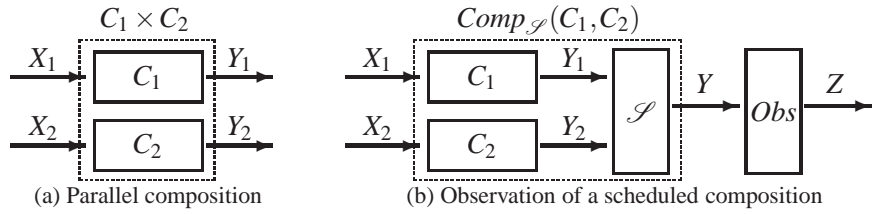

Although the interleaving defines all possible combinations of the sets of traces,
they do not define the probability of their appearance.
To reason about this,
we define a scheduler that takes two sets of traces and
probabilistically schedules their actions to form each possible trace in their interleaving.

\begin{definition}[Scheduler] \rm
\label{def:scheduler}
A {\em scheduler $\S$ on $\Y_1$ and $\Y_2$} is a function that, given two traces $y_1\in\Y_1$ and $y_2\in\Y_2$,
produces a probability distribution over all the possible interleaving $\Int(y_1, y_2)$.
We denote by $\S(y_1,y_2)[y]$ the conditional probability of having an interleaved trace $y$ given $y_1$ and $y_2$.
\end{definition}

We define a deterministic scheduler as one that produces the same output for any given two traces.
\begin{definition}[Deterministic scheduler] \rm
\label{def:det-scheduler}
A scheduler $\S$ is \emph{deterministic} if for any two traces $y_1$ and $y_2$,
there exists $y\in\Int(y_1, y_2)$ such that $\S(y_1,y_2)[y] = 1$.
\end{definition}

This provides the basis for composing channels in general, however this requires
some delicacy since the interleaving of different traces may produce the same
result.
For example, given $y_1=\tau.\oap m s$ and $y_2=\tau$ then one of the possible
traces produced is $\tau.\tau.\oap m s$. However, given $y_3=\oap m s$ and
$y_4=\tau.\tau$ then the same trace $\tau.\tau.\oap m s$ could also
be produced.

Let $p(y_1, y_2)$ be the joint probability that two component channels output two traces $y_1$ and $y_2$.
Then the probability that $\S$ produces an interleaved trace $y$ is given by:
$
p(y) = \sum_{y_1\in\Y_1, y_2\in\Y_2} p(y_1,y_2)\cdot \S(y_1,y_2)[y]
\texttt{.}
$
By~\cite{KawamotoCP14:qest} we obtain $C_1[x_1, y_1] C_2[x_2,y_2] = p(y_1,y_2|x_1,x_2)$.
Hence we can define scheduled composition of channels as follows.

\begin{definition}[Scheduled composition of channels] \rm
\label{def:composition}
The \emph{scheduled composition of} two channels $\chan_1= (\X_1,\Y_2,C_1)$ and $\chan_2 = (\X_2,\Y_2,C_2)$ \emph{with respect to} a scheduler $\S$ is
define as the channel $(\X_1\times\X_2,\Int(\Y_1,\Y_2),C)$ where the matrix element for $x_1\in\X_1$, $x_2\in\X_2$ and $y\in\Int(\Y_1,\Y_2)$ is given by:
\\[-1pt]
$
C[(x_1, x_2), y] = \sum_{y_1\in\Y_1,\, y_2\in\Y_2} C_1[x_1,y_1] C_2[x_2,y_2] \S(y_1,y_2)[y]
\texttt{.}
$
\end{definition}
We denote this scheduled composition by $\comps{\S}(\chan_1, \chan_2)$.
Note that the scheduled composition of $n$ channels can be defined by adapting the scheduler $\S$ to operate over $n$ traces in the obvious manner.

\subsection{Examples of Scheduled Composition}
\label{ssec:eg-deterministic-schedulers}

This section presents some example channels and schedulers that illustrate the
main results of this paper.
For simplicity they shall all limit their secrets to the set $\X_B=\{0,1\}$,
and their outputs to the set $\Y_m=\{\oap m 0,\tau.\oap m 0,\oap m 1,\tau.\oap m 1\}$
for a parameter $m$.

Consider the channel $\chan_1=(\X_B,\Y_{m_1},C_1)$ where
$C_1$ is given by Table~\ref{table:channel-matrix-C1}.
This channel can be considered as one that half the time simply outputs the secret
via $\oap {m_1} s$ and half the time outputs the exclusive-or $\oplus$ of the secret with $1$ as in
$\tau.\oap {m_1} {s\oplus 1}$, with the $\tau$ representing the calculation effort.
Note that this channel leaks $100\%$ of the information about the secret.
Also consider the channel $\chan_2=(\X_B,\Y_{m_2},C_2)$ where
$C_2$ is given by Table~\ref{table:channel-matrix-C2}.
This channel is similar to $\chan_1$, except that the internal action $\tau$
is observable when disclosing the secret rather than its exclusive-or.
Again this channel leaks all the secret information.

\begin{table}[t]
\begin{center}
\begin{tabular}{lr}
\hspace{-20pt}~
\begin{minipage}{0.48\hsize}
\begin{flushleft}
\begin{small}
\begin{tabular}{rr|llll}
& \multicolumn{4}{c}{~\qquad~ observable} \\[-6pt]
& & $\oap {m_1} 0$ & \hspace{-4.5pt}$\tau. \oap {m_1} 0$ & \hspace{-4.5pt}$\oap {m_1} 1$ & \hspace{-4.5pt}$\tau.\oap {m_1} 1$\\  \cline{2-6}
& 0 & 0.5 & 0 & 0 & 0.5\\[-5pt]
\raisebox{7pt}[0cm][0cm]{secret~\hspace{-11.0pt}~} \vspace{1pt}
& 1 & 0 & 0.5 & 0.5 & 0 
\end{tabular}
\vspace{-0.3cm}
\caption{Channel matrix $C_1$}
\label{table:channel-matrix-C1}
\end{small}
\end{flushleft}
\end{minipage}
\begin{minipage}{0.48\hsize}
\begin{center}
\begin{small}
\begin{tabular}{rr|llll}
& \multicolumn{4}{c}{~\qquad~ observable} \\[-6pt]
& & $\oap {m_2} 0$ & \hspace{-4.5pt}$\tau . \oap {m_2} 0$ & \hspace{-4.5pt}$\oap {m_2} 1$ & \hspace{-4.5pt}$\tau.\oap {m_2} 1$\\ \cline{2-6}
& 0 & 0 & 0.5 & 0.5 & 0\\[-5pt]
\raisebox{7pt}[0cm][0cm]{secret~\hspace{-11.0pt}~} \vspace{1pt}
& 1 & 0.5 & 0 & 0 & 0.5 
\end{tabular}
\vspace{-0.2cm}
\caption{Channel matrix $C_2$}
\label{table:channel-matrix-C2}
\end{small}
\end{center}
\end{minipage}
\end{tabular}
\end{center}
\hacks{\vspace{-0.6cm}}
\end{table}

When combining channels the r\^ole of the scheduler is very significant with
respect to the information leakage.
This section defines three types of simple schedulers for illustrating the results here.

The simplest scheduler is one that outputs the first and
second observable outputs concatenated, i.e.~given $y_1$ and $y_2$ outputs $y_1.y_2$.

\begin{definition}\rm \label{def:left-first-sc}
The \emph{(left-first) deterministic sequential scheduler} $\SDS$ is defined as follows:
$\SDS(y_1,y_2)[y]$ is $1$ if $y= y_1.y_2$ and $0$ otherwise
where $y_1\in\Y_1$, $y_2\in\Y_2$ and $y\in\Y$.
\end{definition}

\begin{example} \label{eg:SDS}
The scheduled composition $\comps {\SDS}(\chan_1,\chan_2)$ w.r.t. 
$\SDS$ has the same information leakage
as the parallel composition $\chan_1 \compd \chan_2$.
This can be shown since it follows from the definition of $\SDS$ that, 
for each $y\in\Y$, $\SDS$ uniquely identifies a pair $(y_1, y_2)$ of outputs.
For instance, let us consider the prior distribution $\pi$ on $\X_1\times\X_2$ defined by 
$(0.15, 0.20, 0.30, 0.35)$.
Then, for both of the composed channels, the mutual information is about $1.926$ and the min-entropy leakage is about $1.515$.
\end{example}

Next is the \emph{fair sequential scheduler} $\SFS$ that fairly chooses between the
first or second observable and produces that in its entirety before producing the
other.

\begin{definition}\rm \label{def:fair-seq-sc}
The \emph{fair sequential scheduler} $\SFS$ is defined by
\vspace{-0.2cm}%
\small
\begin{equation*}
\begin{array}{rcll}
\SFS(y_1,y_2)[y]
&\define&
\left\{
\begin{array}{ll}
1&\mbox{if} ~ y_1 = y_2\wedge y=y_1.y_2\\[-4pt]
0.5~~&\mbox{if} ~ y_1\neq y_2\wedge (y=y_1.y_2 \vee y=y_2.y_1)\\[-4pt]
0&\mbox{otherwise.}
\vspace{-0.1cm}
\end{array}
\right.
\end{array}
\end{equation*}
\end{definition}
\normalsize
Similar to the deterministic sequential scheduler, the information leakage can be proven to be equal
to that of the parallel composition of channels for this example.

\begin{example} \label{eg:SFS}
The scheduled composition $\comps {\SFS}(\chan_1,\chan_2)$ w.r.t. 
$\SFS$ has the same information leakage as the parallel composition $\chan_1 \compd \chan_2$.
This can be shown similarly to Example~\ref{eg:SDS}.
\end{example}

Note that the leakage preservation does {\em not} hold in general as illustrated in the following example.
\begin{example} \label{eg:less-leakage}
Consider when $\Y_1=\{\tau,\tau.\tau\}$ and $\Y_2=\{\oap m 0,\tau.\oap m 0\}$.
The observed output $\tau.\tau.\oap m 0$ can arise from
$\S(\tau,\tau.\oap m 0)$ and $\S(\tau.\tau,\oap m 0)$, where $\S$
can be $\SDS$ or $\SFS$.
Thus, both 
the schedulers $\SDS$ and $\SFS$
may allow less information leakage than the parallel composition.
\end{example}

The third example scheduler is the \emph{fair interleaving scheduler} $\SFI$ that evenly chooses the next action from the two observables.
\begin{definition}\rm \label{def:fair-inter-sc}
The \emph{fair interleaving scheduler} $\SFI$ is recursively defined as
\vspace{-0.2cm}%
\small
\begin{equation*}
\begin{array}{c}
\SFI(y_1,y_2)[y]
\define
\left\{
\begin{array}{ll}
0.5 \SFI(y_1',y_2)[y']&
\mbox{if } ~ y\!=\!\alpha.y'
\wedge\ y_1\!=\!\alpha.y_1'
\wedge\ y_2\!=\!\beta.y_2'
\wedge\ \alpha\!\neq\!\beta
\\[-3pt]
0.5 \SFI(y_1,y_2')[y']&
\mbox{if } ~ y\!=\!\beta.y'
\wedge\ y_1\!=\!\alpha.y_1'
\wedge\ y_2\!=\!\beta.y_2'
\wedge\ \alpha\!\neq\!\beta
\\[-3pt]
0.5 \SFI(y_1',y_2)[y'] + 0.5 \SFI(y_1,y_2')[y']
&
\mbox{if } ~ y\!=\!\alpha.y'
\wedge\ y_1\!=\!\alpha.y_1'
\wedge\ y_2\!=\!\alpha.y_2'
\\[-3pt]
1&
\mbox{if } ~ (y=y_1
\wedge\ y_2=\emptyset )
\vee\ (y=y_2
\wedge\ y_1=\emptyset)
\\[-3pt]
0&
\mbox{otherwise.}
\vspace{-0.1cm}
\end{array}
\right.
\end{array}
\end{equation*}
\normalsize
\end{definition}
The fair interleaving scheduler $\SFI$ turns out to often have impact on the leakage
compared to the parallel composition of channels.
This can occur in a variety of ways and shall be explored in detail later.

\begin{example} \label{eg:SFI}
The scheduled composition $\comps {S_F}(\chan_1,\chan_2)$ w.r.t. 
$\SFI$ has less information leakage than the parallel composition $\chan_1 \compd \chan_2$.
This can be shown by considering
when the output $y$ is of the form $\tau.\oap {m_1} 0.\oap {m_2} 0$, which
can arise from both $\SFI(\tau.\oap {m_1} 0,\oap {m_2} 0)$ and
$\SFI(\oap {m_1} 0,\tau.\oap {m_2} 0)$.
Since $y$ does not uniquely identify the outputs $y_1$ and $y_2$,\, 
$\SFI$ could allow less leakage than the parallel composition.
For instance, for the prior $(0.15, 0.20, 0.30, 0.35)$, the mutual information of the scheduled
composition w.r.t. $\SFI$ is $1.695$.
This is less than those of the parallel composition and scheduled composition w.r.t. $\SDS$ in Example~\ref{eg:SDS} (both $1.926$),
thus the scheduler here alone is responsible for reducing the leakage.
\end{example}

\section{Information Leakage to Observers}
\label{sec:observed-leak}

\hacks{\vspace{-0.2cm}}
\subsection{Observers}
\label{ssec:observer}
Many kinds of capabilities of observing systems have been considered;
e.g.~an observer for strong bisimulation $\sim_s$ can recognise the internal action:  $\tau.\oap m {v} \not\sim_s \oap m {v}$, while one for weak bisimulation $\sim_w$ cannot: $\tau.\oap m {v} \sim_w \oap m {v}$.
To model different kinds of capabilities of observation, we define an \emph{observer's views} $\Z$ as the set of values recognised by the observer.
For example, $\tau.\oap m {v}$ and $\oap m {v}$ fall into two different views to an observer for strong bisimulation, but to the same view to an observer for weak bisimulation. 

We formalise the notion of an observer using a matrix that defines relationships between observable outputs of systems and the observer's views. In particular, we allow for probabilistic accuracy in observation;
that is the observer may not be perfectly accurate in identifying an output.

\begin{definition}[Generalised observer] \rm
An \emph{observer} $\O$ is defined as a triple $(\Y, \Z, \Obs)$ consisting of a finite set $\Y$ of observables, a finite set $\Z$ of observer's views and an \emph{observer matrix} $\Obs$ each of whose row represents a probability distribution; i.e., for all $y \in \Y$ we have $\sum_{z \in \Z} Obs[y, z] = 1$.
Each matrix element $\Obs[y, z]$ represents the probability that the observer has the view $z$ when the actual output is $y$.
\end{definition}

The observation matrix $\Obs$ describes the capability of the attacker to distinguish between traces.
This capability of observation 
has been formalised as an equivalence relation between states of a system in prior work~\cite{BMLW13}.
In fact, an equivalence relation $\sim$ between traces characterises a class of observers.

\begin{definition}[$\sim$-observer] \rm
Given an equivalence relation $\sim$ on $\Y$, an observer $(\Y, \Z, \allowbreak \Obs)$ is called a \emph{$\sim$-observer} 
if, for all $y_1, y_2 \in \Y$,\, $y_1 \sim  y_2$ is logically equivalent to $\Obs[y_1, z] = \Obs[y_2, z]$ for all $z\in Z$.
\end{definition}

For instance, we can consider the $\sim_s$-observer for strong bisimulation $\sim_s$ and the $\sim_w$-observer for weak bisimulation $\sim_w$.
Observe that $\sim_s$ is the identity relation on traces here.
Further, note that for every observer $\O$, there exists an equivalence relation $\sim$ between traces such that $\O$ is a $\sim$-observer.
This equivalence relation $\sim$ is defined by the following:~
$\sim \define \{ (y_1, y_2) \in \Y\times\Y \mid \mbox{ for all $z\in Z$,\, } \Obs[y_1, z] = \Obs[y_2, z] \}$.
On the other hand, the observation matrix is \emph{not} uniquely determined by the equivalence relation and therefore can express a wider range of observers' capabilities than the equivalence relation.

Among $\sim$-observers, we often consider observers that always have the same view on the same trace.

\begin{definition}[Deterministic observer] \rm
We say that an observer $(\Y, \Z, \Obs)$ is \emph{deterministic} if each probability in $\Obs$ is either $0$ or $1$; i.e., for all $y \in \Y$, there exists a unique $z \in \Z$ such that $\Obs[y, z] = 1$.
\end{definition}

For any deterministic $\sim$-observer $(\Y, \Z, \Obs)$ and
any  $y_1, y_2 \in \Y$, we have $y_1 \sim y_2$ iff, for all $z \in \Z$, we have $\Obs[y_1,z] =\Obs[y_2,z] \in \{0, 1\}$.
Then this observer always detects the equivalence class $[y]_{\sim}$ of the output $y$ from any given view $z$.
For this reason, when defining a deterministic $\sim$-observer, we typically take the set $\Z$ of views as the quotient set of $\Y$ by $\sim$, and for any $y \in \Y$ and $z \in \Z$,\, $\Obs[y, z] = 1$ iff $z = [y]_{\sim}$.
For example, consider the deterministic observers corresponding to 
$\sim_s$.
\begin{example}[Deterministic $\sim_s$-observer] \rm
A deterministic $\sim_s$-observer $(\Y, \Z, \Obs)$ satisfies the property that, for any distinct $y_1, y_2 \in \Y$, there exists a $z \in \Z$ such that either $\Obs[y_1,z] = 0$ and $ \Obs[y_2,z] = 1$ or $\Obs[y_1,z] = 1$ and $ \Obs[y_2,z] = 0$.
Therefore this observer always detects the output $y$ of the channel from any given view $z$.
For this reason we call a deterministic $\sim_s$-observer a \emph{perfect observer}.
\end{example}

Various kinds of bisimulations, or relations on observables, have been proposed
and can be represented by various deterministic observers.
Indeed, other kinds of relations can also be represented; consider an observer that cannot distinguish
which source $m_i$ a value is output upon. This can be formalised by using the equivalence relation
$\sim_{ch}$ on traces that cannot distinguishes $m_1$ from $m_2$.

The last example observer here effectively ensures no leakage by
seeing all outputs as the same:
\begin{example}[Unit observer] \rm
An observer $\O = (\Y, \Z, \Obs)$ is called a {\em unit observer} if $\Z$ is a singleton.
It has the same view regardless of the outputs of the channel, thus can detect no leakage of the channel.
\end{example}

\subsection{Observed Information Leakage}

The amount of observed information leakage depends on the capability of the observer.
To quantify this we introduce the notion of \emph{observed information leakage}.

\begin{definition}[Observed information leakage] \rm
Let $\chan = (\X, \Y, C)$ be a channel and $\O = (\Y, \Z, \Obs)$ be an observer.
For each leakage measure $L \in \{ \I, \L \}$ and any prior $\pi$ on $\X$, we define \emph{observed information leakage} 
by
$
L_\O(\pi, \chan) = L(\pi, \chan\cdot\O)
$
where $\chan\cdot\O = (\X, \Z, C\cdot\Obs)$ is the cascade composition~\cite{Espinoza:11:FAST} of $\chan$ and $\O$.
Similarly, for each $L \in \{ \C, \MC \}$, we define $L_\O(\chan) = L(\chan\cdot\O)$.
\end{definition}

We present properties of observed information leakage as follows.
The first remark is that, for each equivalence relation $\sim$ on traces, all deterministic $\sim$-observers give the same observed leakage values.
\begin{proposition}\label{prop:unique-det-sim-obs}
Let $\pi$ be any prior on $\X$ and $\chan = (\X, \Y, C)$ be any channel.
For any equivalence relation $\sim$ on $\Y$ and any two deterministic $\sim$-observers $\O_1$, $\O_2$, we have $L_{\O_1}(\pi, \chan) = L_{\O_2}(\pi, \chan)$ for $L \in \{ \I, \L \}$ and $L_{\O_1}(\chan) = L_{\O_2}(\chan)$ for $L \in \{ \C, \MC \}$.
\end{proposition}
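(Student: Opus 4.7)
The plan is to show that every deterministic $\sim$-observer behaves, up to relabeling of views and the insertion of unused views, as the canonical quotient map $y \mapsto [y]_\sim$. Since the leakage measures $\I, \L, \C, \MC$ are invariant under such cosmetic changes, the two cascade channels $\chan\cdot\O_1$ and $\chan\cdot\O_2$ will yield the same observed leakage.

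First, I would unpack the deterministic $\sim$-observer definition. Fix $\O_i = (\Y, \Z_i, \Obs_i)$ for $i=1,2$, each deterministic and each a $\sim$-observer. By determinism, for every $y \in \Y$ there is a unique view $z^{(i)}_y \in \Z_i$ with $\Obs_i[y, z^{(i)}_y] = 1$ and $\Obs_i[y, z]=0$ for $z \neq z^{(i)}_y$. By the $\sim$-observer property, $y_1 \sim y_2$ is equivalent to $\Obs_i[y_1, z] = \Obs_i[y_2, z]$ for all $z \in \Z_i$, which (given determinism) is equivalent to $z^{(i)}_{y_1} = z^{(i)}_{y_2}$. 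Hence the assignment $y \mapsto z^{(i)}_y$ factors through $\Y/{\sim}$, and its image $\Z_i^* \subseteq \Z_i$ is in bijection with $\Y/{\sim}$.

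Second, I would compute the cascade channel matrix entries explicitly. For $x \in \X$ and $z \in \Z_i$,
\[
(C\cdot\Obs_i)[x, z] \;=\; \sum_{y \in \Y} C[x,y]\,\Obs_i[y,z] \;=\; \sum_{y :\, z^{(i)}_y = z} C[x,y].
\]
If $z \notin \Z_i^*$ this column is identically zero; otherwise it corresponds to a unique equivalence class $[y^*]_\sim$ and equals $\sum_{y \in [y^*]_\sim} C[x, y]$, which does \emph{not} depend on $i$. Therefore $\chan\cdot\O_1$ and $\chan\cdot\O_2$ differ only in the labels of their non-zero columns and in the presence of additional all-zero columns.

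Third, I would invoke invariance of the four leakage quantities under column permutation and under removal/insertion of all-zero columns. For $\I, \L$ this is immediate: each is a sum indexed by output values in which an all-zero column contributes $0$ (for $\L$, because the max is $0$; for $\I$, because $C[x,y]=0$ kills each term), and the terms are symmetric in the labeling. The capacities $\C$ and $\MC$ inherit the invariance by taking the supremum over priors. Combining this with the previous step yields the equalities $L_{\O_1}(\pi, \chan) = L_{\O_2}(\pi, \chan)$ and $L_{\O_1}(\chan) = L_{\O_2}(\chan)$.

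The only mildly delicate point is establishing the invariance of the leakage measures under all-zero columns; this is standard but deserves a line of justification. Everything else is bookkeeping about the bijection between $\Z_i^*$ and $\Y/{\sim}$.
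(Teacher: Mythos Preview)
Your proposal is correct and follows essentially the same approach as the paper: both argue that the observer matrices of $\O_1$ and $\O_2$ (and hence the cascades $\chan\cdot\O_i$) coincide after deleting all-zero columns and permuting the remaining ones, and then appeal to the invariance of $\I,\L,\C,\MC$ under such operations. The paper states this in a single sentence, whereas you spell out explicitly why determinism plus the $\sim$-observer condition force the map $y\mapsto z^{(i)}_y$ to be the quotient map up to relabeling, and you also justify the invariance of the leakage measures under all-zero columns---details the paper leaves implicit.
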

\withproof{
\begin{proof}
The two observer matrices of $\O_1$ and $\O_2$ are identical when removing the columns with all zeros and reordering the other columns.
Therefore the observed leakage values with $\O_1$ and $\O_2$ coincide.
\end{proof}}

The following states that the deterministic $\sim_s$-observers and unit observers respectively have the maximum and minimum capabilities of distinguishing traces.
That is, the deterministic $\sim_s$-observer can detect every behaviour of the channel accurately and does not alter the leakage of the channel in any manner,
while the unit observers cannot detect any leakage of the channel.
\begin{proposition}\label{prop:obs-max-min}
For each $L \in \{ \I, \L \}$,\, $\displaystyle 0 \le L_\O(\pi, \chan) \le L(\pi, \chan)$.
For each $L \in \{ \C, \MC \}$,\, $\displaystyle 0 \le L_\O(\chan) \le L(\chan)$.
In these inequations, the left equalities hold when $\O$ is a unit observer, and the right ones hold when $\O$ is a deterministic $\sim_s$-observer.
\end{proposition}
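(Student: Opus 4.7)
The plan is to handle the four inequalities together and then verify the two boundary cases by inspection.

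First, non-negativity of all four measures is standard. $\I(\pi,\chan)\ge 0$ follows from Jensen's inequality applied to the logarithmic term in Definition~\ref{def:MI}, and $\L(\pi,\chan)\ge 0$ holds because $V(\pi,\chan)=\sum_y \max_x \pi[x]C[x,y] \ge \max_x \sum_y \pi[x]C[x,y] = V(\pi)$. Taking suprema over $\pi$ gives the same bound for $\C$ and $\MC$.

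Second, for the upper bounds I would invoke the data processing inequality applied to the cascade $\chan\cdot\O$. Since $\O$ acts as a stochastic post-processing of the outputs of $\chan$, it cannot increase the information about the secret; this is a well-known property of both mutual information and min-entropy leakage exploited, for instance, in~\cite{Espinoza:11:FAST}. Hence $L(\pi,\chan\cdot\O)\le L(\pi,\chan)$ for $L\in\{\I,\L\}$, and the analogous bounds for $\C$ and $\MC$ follow by taking suprema.

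Third, the unit-observer case is immediate: if $\Z=\{z_0\}$ then $\Obs[y,z_0]=1$ for every $y$, so every row of the cascaded matrix $C\cdot\Obs$ equals the point mass at $z_0$, independently of $x$. Thus $L_\O(\pi,\chan)=0$ for $L\in\{\I,\L\}$ and $L_\O(\chan)=0$ for $L\in\{\C,\MC\}$. Fourth, for the deterministic $\sim_s$-observer case, I would use the fact that $\sim_s$ is the identity relation on $\Y$, so each equivalence class $[y]_{\sim_s}$ is a singleton. By Proposition~\ref{prop:unique-det-sim-obs} it suffices to exhibit one such observer, and the canonical choice takes $\Z=\Y/{\sim_s}$ with $\Obs[y,[y']_{\sim_s}]=1$ iff $y=y'$. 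The resulting $\Obs$ is a permutation matrix, so $\chan\cdot\O$ is $\chan$ up to a bijective relabelling of output columns, under which all four leakage measures are invariant; equality follows.

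The main obstacle I anticipate is giving a clean self-contained justification of the data processing inequality for min-entropy leakage, since the definition in Definition~\ref{def:MEL} is not as standard as the Shannon case. For this I would use the one-line estimate $\sum_z \max_x \sum_y \pi[x]C[x,y]\Obs[y,z] \le \sum_z \sum_y \max_x \pi[x]C[x,y]\Obs[y,z] = \sum_y \max_x \pi[x]C[x,y]$, where the last equality uses $\sum_z \Obs[y,z]=1$; dividing by $V(\pi)$ and taking $\log$ yields $\L(\pi,\chan\cdot\O)\le\L(\pi,\chan)$.
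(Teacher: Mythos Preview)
Your proposal is correct and follows essentially the same approach as the paper: invoke the data-processing inequality for $\I$ (and hence $\C$), the cascade-composition result of~\cite{Espinoza:11:FAST} for $\L$ (and hence $\MC$), and verify the boundary cases by noting that the unit observer yields a single-column cascade while the deterministic $\sim_s$-observer matrix is, up to column permutation and removal of all-zero columns, the identity. Your version is a bit more self-contained---you spell out non-negativity and give an explicit one-line proof of the min-entropy data-processing inequality rather than just citing it---but the underlying argument is the same.
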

\withproof{
\begin{proof}
If $L = \I$, then it follows from the data-processing inequality~\cite{Cover:06:BOOK} that $L(\pi, \chan\cdot\O) \le L(\pi, \chan)$.
Similar for the case $L = \C$.
If $L = \L$, then it follows from a property of the cascade composition~\cite{Espinoza:11:FAST} that $L(\pi, \chan\cdot\O) \le L(\pi, \chan)$.
Similar for the case $L = \MC$.

When $\O$ is a unit observer, the cascade $\chan\cdot\O$ is a $\#\X \times 1$-matrix.
Hence, for each $L \in \{ \I, \L \}$,\, $L(\pi, \chan\cdot\O) = 0$ regardless of $\pi$ and $\chan$.
Therefore we obtain $L(\chan\cdot\O) = 0$ for each $L \in \{ \C, \MC \}$.

When $\O$ is a deterministic $\sim_s$-observer, we obtain an identity matrix from $\Obs$ by removing the column with all zeros and by sorting the order of columns.
Therefore, for each $L \in \{ \I, \L \}$, we have $L_\O(\pi, \chan) = L(\pi, \chan)$ and, for each $L \in \{ \C, \MC \}$, we have $L_\O(\chan) = L(\chan)$.
\end{proof}}

Next we compare the capabilities of generalised observers.
Recall the composition-refinement relation $\sqsubseteq_\circ$ on channels~\cite{Alvim:12:CSF,McIverMSEM14}:
A channel $\chan_1$ is \emph{composition-refined} by another $\chan_2$, written as $\chan_1 \sqsubseteq_\circ \chan_2$, iff there exists a channel $\chan'$ such that $\chan_1 = \chan_2 \cdot \chan'$.
Since the generalised observers are also channels, we can consider this ordering $\sqsubseteq_\circ$ on observers.
For example, the unit observer is composition-refined by $\sim_w$-observers,
and the deterministic $\sim_w$-observer is by the deterministic $\sim_s$-observer.
For another example, any probabilistic $\sim_a$-observer is composition-refined by the deterministic $\sim_a$-observer:
\begin{proposition} \label{prop:ordering-deterministic}
Given any equivalence relation $\sim_a$ on $\Y$ let $\O_1 = (\Y, \Z, \Obs_1)$ and $\O_2 = (\Y, \Z, \Obs_2)$ be two $\sim_a$-observers.
If $\O_2$ is deterministic then $\O_1 \sqsubseteq_\circ \O_2$.
\end{proposition}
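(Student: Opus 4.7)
The plan is to exhibit an explicit channel $\chan'$ witnessing $\O_1 \sqsubseteq_\circ \O_2$, i.e., a stochastic matrix with $\Obs_1 = \Obs_2 \cdot \chan'$, and to verify the required identity by using the deterministic structure of $\O_2$ together with the fact that $\O_1$ is a $\sim_a$-observer.

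First, I would unpack the hypothesis that $\O_2$ is a deterministic $\sim_a$-observer. By the paragraph preceding Proposition~\ref{prop:unique-det-sim-obs}, I may assume without loss of generality that $\Z$ contains the quotient $\Y/\mathord{\sim_a}$ and that for every $y \in \Y$ we have $\Obs_2[y, [y]_{\sim_a}] = 1$ and $\Obs_2[y, z] = 0$ for $z \neq [y]_{\sim_a}$. In particular, for every $y$ there is a unique $z_y \in \Z$ with $\Obs_2[y, z_y] = 1$, and $z_y$ is determined by the $\sim_a$-class of $y$.

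Second, I would define $\chan' : \Z \to \Z$ as follows. For each $z \in \Z$ that is of the form $[y]_{\sim_a}$ for some $y \in \Y$, pick any representative $y$ with $[y]_{\sim_a} = z$ and set $\chan'[z, z'] \define \Obs_1[y, z']$ for every $z' \in \Z$. This is well-defined because $\O_1$ is a $\sim_a$-observer, so $y \sim_a y'$ implies $\Obs_1[y, \cdot] = \Obs_1[y', \cdot]$. For any remaining $z \in \Z$ (those not of the form $[y]_{\sim_a}$), define $\chan'[z, \cdot]$ to be an arbitrary probability distribution on $\Z$. Each row of $\chan'$ then sums to $1$, so $\chan'$ is a valid channel matrix.

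Third, I would verify $\Obs_1 = \Obs_2 \cdot \chan'$ by direct calculation: for every $y \in \Y$ and $z' \in \Z$,
\[
(\Obs_2 \cdot \chan')[y, z'] = \sum_{z \in \Z} \Obs_2[y, z] \, \chan'[z, z'] = \chan'[z_y, z'] = \Obs_1[y, z'],
\]
where the middle equality uses determinism of $\O_2$ and the last uses the definition of $\chan'$ at $z = z_y = [y]_{\sim_a}$. This yields $\O_1 = \O_2 \cdot \chan'$, so $\O_1 \sqsubseteq_\circ \O_2$.

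The only subtle point is ensuring $\chan'$ is well-defined on rows: this is exactly where the hypothesis that $\O_1$ is a $\sim_a$-observer enters, guaranteeing that the definition $\chan'[[y]_{\sim_a}, \cdot] = \Obs_1[y, \cdot]$ does not depend on the chosen representative $y$. Everything else is a routine matrix computation.
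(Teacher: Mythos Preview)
Your proposal is correct and follows essentially the same approach as the paper: both construct the witnessing channel $\chan'$ by taking the (necessarily $\sim_a$-constant) rows of $\Obs_1$ and indexing them by the views hit by the deterministic observer $\O_2$. The paper's proof is a one-line sketch (``choose distinct rows of $\Obs_1$ and reorder them''), while you spell out the well-definedness on equivalence classes, handle the unused views of $\Z$, and verify the matrix identity explicitly; these are exactly the details the paper leaves implicit.
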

\withproof{
\begin{proof}
By the definition of $\sqsubseteq_\circ$, it suffices to construct a channel $\chan'$ such that  $\O_1 = \O_2\cdot\chan'$.
In fact we can construct such a $\chan'$ by choosing distinct rows of the matrix $\Obs_1$ and reordering them.
\end{proof}}

The composition-refined observer will observe less information leakage.
\begin{theorem} \label{thm:ordering-observers}
Let $\O_1$ and $\O_2$ be two observers such that $\O_1 \sqsubseteq_\circ \O_2$.
Then, for any prior $\pi$ and any channel $\chan$, we have 
$L_{\O_1}(\pi, \chan) \le L_{\O_2}(\pi, \chan)$ for $L \in \{\I, \L\}$ and
$L_{\O_1}(\chan) \le L_{\O_2}(\chan)$ for $L \in \{\C, \MC\}$.
\end{theorem}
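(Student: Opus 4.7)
The plan is to unpack the definition of $\sqsubseteq_\circ$ and then reduce to the post-processing (data-processing) inequality applied to the observed leakage.

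First, I would use the hypothesis $\O_1 \sqsubseteq_\circ \O_2$ to obtain a channel $\chan'$ such that $\O_1 = \O_2 \cdot \chan'$, where $\cdot$ denotes cascade composition. Because cascade composition is associative (matrix multiplication), this yields
\[
\chan \cdot \O_1 \;=\; \chan \cdot (\O_2 \cdot \chan') \;=\; (\chan \cdot \O_2) \cdot \chan'
\text{.}
\]
Hence $\chan \cdot \O_1$ is itself a cascade of $\chan \cdot \O_2$ with a further channel $\chan'$; intuitively, everything the weaker observer sees can be obtained by post-processing what the stronger observer sees.

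Next, for the mutual-information cases $L \in \{\I, \C\}$, I would invoke the data-processing inequality cited in the proof of Proposition~\ref{prop:obs-max-min}: for any prior $\pi$ and channels $\chan_a, \chan_b$, one has $\I(\pi, \chan_a \cdot \chan_b) \le \I(\pi, \chan_a)$, and therefore $\C(\chan_a \cdot \chan_b) \le \C(\chan_a)$ by maximizing over $\pi$. Applying this with $\chan_a = \chan \cdot \O_2$ and $\chan_b = \chan'$ gives
\[
\I_{\O_1}(\pi, \chan) = \I(\pi, (\chan\!\cdot\!\O_2)\!\cdot\!\chan') \le \I(\pi, \chan\!\cdot\!\O_2) = \I_{\O_2}(\pi, \chan)
\text{,}
\]
and analogously for $\C$.

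For the min-entropy cases $L \in \{\L, \MC\}$, I would use the corresponding property of cascade composition for min-entropy leakage recalled in the same proof, namely that cascading an extra channel on the right cannot increase the posterior vulnerability: $\L(\pi, \chan_a \cdot \chan_b) \le \L(\pi, \chan_a)$, whence $\MC(\chan_a \cdot \chan_b) \le \MC(\chan_a)$. Specialising to $\chan_a = \chan \cdot \O_2$ and $\chan_b = \chan'$ and using the same decomposition as above delivers the inequalities $\L_{\O_1}(\pi, \chan) \le \L_{\O_2}(\pi, \chan)$ and $\MC_{\O_1}(\chan) \le \MC_{\O_2}(\chan)$.

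The proof is essentially a bookkeeping exercise once the decomposition $\chan \cdot \O_1 = (\chan \cdot \O_2) \cdot \chan'$ is in hand; there is no genuine obstacle since the hard work is already done by the data-processing inequality and its min-entropy analogue (the very fact that these hold is precisely what makes $\sqsubseteq_\circ$ the right refinement order on channels). The only subtlety to check is that the cited post-processing inequality for min-entropy leakage applies to an arbitrary cascading channel $\chan'$ and not merely to deterministic ones, but this is exactly the content of the composition-refinement characterisation used in Proposition~\ref{prop:obs-max-min}.
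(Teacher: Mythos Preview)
Your proposal is correct and follows essentially the same approach as the paper: both rely on the decomposition $\chan\cdot\O_1 = (\chan\cdot\O_2)\cdot\chan'$ coming from $\O_1 \sqsubseteq_\circ \O_2$, and then invoke the data-processing inequality for $\I,\C$ and the cascade-composition monotonicity result for $\L,\MC$. You have simply spelled out explicitly the associativity step that the paper leaves implicit in its two-line proof.
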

\withproof{
\begin{proof}
For $L \in \{\I, \C\}$, we obtain the claim from the data processing inequality.
For $L \in \{\C, \MC\}$, the claim follows from a result for cascade composition~\cite{Espinoza:11:FAST}.
\end{proof}}

These results imply that no probabilistic $\sim$-observer detect more leakage than deterministic ones.

\subsection{Examples of Deterministic Observers}
\label{ssec:eg-deterministic-observation}
Theorem~\ref{thm:ordering-observers} implies that the deterministic $\sim_s$-observer does not observe less information leakage than the deterministic $\sim_w$-observer.

\begin{example} \label{eg:ordering-observers}
Let us consider the scheduled compositions in Examples~\ref{eg:SDS} and~\ref{eg:SFS} in Section~\ref{ssec:eg-deterministic-schedulers}.
Both the composed channels leak all secrets without considering observers;
i.e., they do so in the presence of $\sim_s$-observer.
On the other hand, they leak no secrets to a weakly-bisimilar observer.
For example, for each $i\in\{1,2\}$, we define the deterministic $\sim_w$-observer $\O_i$  as
$(
 \{\oap {m_i} 0,\tau.\oap {m_i} 0,\oap {m_i} 1,\tau.\oap {m_i} 1\}, \allowbreak
 \{[ \oap {m_i} 0 ]_{\sim_w}, [ \oap {m_i} 1 ]_{\sim_w}\}, \allowbreak
 \Obs)$
where $\Obs$ is the matrix given in Table~\ref{table:obs-matrix-Obs}.
Applying the $\sim_w$-observer $\O_i$ to both $\chan_1$ and $\chan_2$ yields the same matrix presented in Table~\ref{table:composed-matrix-K-Obs}.
Then both channels leak no information to the $\sim_w$-observer.
Therefore, the deterministic $\sim_s$-observer observes more information leakage than the deterministic $\sim_w$-observer also in this example.
\end{example}

\begin{table}[t]
\begin{center}
\begin{tabular}{lr}
\hspace{-14.5pt}~
\begin{minipage}{0.48\hsize}
\begin{flushleft}
\begin{small}
\begin{tabular}{rr|ccc}
& \multicolumn{4}{c}{~\qquad~\qquad~\qquad~\qquad~~~~~~ view} \\[-6.0pt]
& & $[ \oap {m_i} 0 ]_{\sim_w}$ & $[ \oap {m_i} 1 ]_{\sim_w}$ \\ \cline{2-4}
& $\oap {m_i} 0$ ~or~ $\tau.\oap {m_i} 0$ & $1$ & $0$\\[-4.5pt]
\raisebox{8pt}[0cm][0cm]{output~\hspace{-10.0pt}~} \vspace{1pt}
& $\oap {m_i} 1$ ~or~ $\tau.\oap {m_i} 1$ & $0$ & $1$
\end{tabular}
\vspace{-7.5pt}
\caption{Observer matrix $\Obs$~\hspace{-30pt}~}
\label{table:obs-matrix-Obs}
\end{small}
\end{flushleft}
\end{minipage}
~~~
\begin{minipage}{0.48\hsize}
\begin{center}
\begin{small}
\begin{tabular}{rr|cc}
& \multicolumn{3}{c}{~ view} \\[-5.0pt]
& & $[ \oap {m_i} 0 ]_{\sim_w}$ & $[ \oap {m_i} 1 ]_{\sim_w}$\\ \cline{2-4}
& 0 & 0.5 & 0.5 \\[-4.5pt]
\raisebox{8pt}[0cm][0cm]{secret~\hspace{-10.0pt}~} \vspace{1pt}
& 1 & 0.5 & 0.5
\end{tabular}
\vspace{-7.5pt}
~\qquad~
\caption{Composed matrix $C_i\cdot\Obs$}
\label{table:composed-matrix-K-Obs}
\end{small}
\end{center}
\end{minipage}
\end{tabular}
\end{center}
\hacks{\vspace{-0.2cm}}
\end{table}

The scheduled composition can also leak more information than the parallel composition (and even than each component channel) in the presence of imperfect observers.

\begin{example}[Observer dependent]
\label{ex:observer-depend}
Consider the scheduled composition of the channels $\chan_1$ and $\chan_2$ w.r.t. the fair interleaving scheduler $\SFI$.
By Example~\ref{eg:SFI}, the leakage of the scheduled composition w.r.t. $\SFI$ is less than that of the parallel composition in the presence of the deterministic $\sim_s$-observer.

However, the leakage of the scheduled composition is more than that of the parallel composition (and even than that of each component channel) when the $\sim_w$-observer $\O$ is being considered; 
e.g., $\L_{\O}( \pi, \comps{\SFI}(\chan_1, \chan_2) ) = 0.215 > 0 = \L_{\O}( \pi, \chan_1 \compd \chan_2 ) = \L_{\O}( \pi, \chan_1)$ for $\pi = (0.15, 0.20, 0.30, 0.35)$.
\end{example}

\subsection{Example of Probabilistic Observers}
\label{ssec:prob-observation}
The notion of deterministic $\sim$-observers is useful to model various observers, but they may not cover all realistic settings.
For example, when the internal action $\tau$ represents time to perform internal computation, observers may recognise it only probabilistically, for instance with probability $0.7$.
Then such \emph{probabilistic observers} cannot be modeled as deterministic observers but as generalised observers, which quantify the capabilities of probabilistic observation.
As far as we know, no previous work on quantitative information flow analyses have considered probabilistic observers.

\begin{example}
\label{eg:probabilistic-observer}
Consider a probabilistic observer $\O$ that can recognise a single internal action $\tau$ only probabilistically but two or more consecutive $\tau$'s with probability $1$.
For instance, $\O$ recognises the trace $(\tau.\oap {m_i} 0. \oap {m_i} 1)$ correctly with probability $0.7$ and confuses it with either $(\oap {m_i} 0. \oap {m_i} 1)$,\, $(\oap {m_i} 0. \tau. \oap {m_i} 1)$ or $(\oap {m_i} 0. \oap {m_i} 1. \tau)$ each with probability $0.1$.
Consider the schedule-composed channel $\comps {\SFI} (\chan_1, \chan_2)$ from Example~\ref{eg:SFI}.
The observed mutual information is $0.783$ under the probabilistic observer $\O$, 
which is between $0.090$ and $1.695$ as observed under the deterministic $\sim_w$-observer and $\sim_s$-observer.
\end{example}

\section{Relationships between Scheduling and Observation}
\label{sec:main}

This section generalises the previous examples to show three kinds of results.
First, we identify conditions on component channels under which leakage cannot be effected by the scheduled composition.
Second, we show that scheduled composition can leak more or less information than the
parallel composition, including results on the bounds of the information leaked.
Third, we present an algorithm for finding a scheduler that minimises the min-entropy  leakage/min-capacity under any observer

\subsection{Information Leakage Independent of Scheduling}
\label{ssec:perfect}

This section presents results for determining when the leakage is independent of the scheduler.
Regardless of the scheduler and observer, the leakage of the scheduled composition is equivalent to that of the parallel composition under certain conditions on component channels that are detailed below.

\begin{theorem}
\label{thm:no-shared-interleavings}
Let $\chan_1=(\X_1,\Y_1,C_1)$ and $\chan_2=(\X_2,\Y_2,C_2)$ be channels.
Assume that, for any $y_1,y'_1\in\Y_1$ and $y_2,y'_2\in\Y_2$,\,
if $\Int(y_1,y_2)\cap \Int(y'_1,y'_2)\neq\emptyset$ then $y_1 = y'_1$ and $y_2 = y'_2$.
Then, for every scheduler $\S$ and observer $\O$,
the leakage of the scheduled composition is the same as that of the parallel composition.
\end{theorem}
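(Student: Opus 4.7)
The plan is to exploit the disjointness hypothesis to establish that $\comps{\S}(\chan_1,\chan_2)$ and $\chan_1\compd\chan_2$ are mutually composition-refining. Once both $\comps{\S}(\chan_1,\chan_2) \sqsubseteq_\circ \chan_1\compd\chan_2$ and $\chan_1\compd\chan_2 \sqsubseteq_\circ \comps{\S}(\chan_1,\chan_2)$ are in hand, the data-processing inequality (for $\I$ and $\C$) and its min-entropy analogue from~\cite{Espinoza:11:FAST} (for $\L$ and $\MC$) sandwich the leakages from both sides, and the same sandwich survives after cascading any observer $\O$.

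The first refinement is essentially definitional: unfolding Definition~\ref{def:composition} shows $\comps{\S}(\chan_1,\chan_2) = (\chan_1\compd\chan_2)\cdot\S^{\ast}$, where $\S^{\ast}$ is the channel whose row indexed by $(y_1,y_2)\in\Y_1\times\Y_2$ is the distribution $\S(y_1,y_2)[\cdot]$ over $\Int(\Y_1,\Y_2)$. For the converse refinement I would build a deterministic inverse channel $D$: by disjointness, every trace $y$ that lies in some $\Int(y_1,y_2)$ lies in it for a unique pair, so the partial function $\phi(y)=(y_1,y_2)$ is well-defined on the support of the scheduled composition. Set $D[y,(y_1,y_2)]=1$ iff $\phi(y)=(y_1,y_2)$, with any fixed default elsewhere. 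Then a direct calculation from Definition~\ref{def:composition} gives
\[
\sum_y C[(x_1,x_2),y]\,D[y,(y_1,y_2)] \;=\; C_1[x_1,y_1]\,C_2[x_2,y_2]\!\!\sum_{y:\phi(y)=(y_1,y_2)}\!\!\S(y_1,y_2)[y] \;=\; C_1[x_1,y_1]\,C_2[x_2,y_2],
\]
so $\comps{\S}(\chan_1,\chan_2)\cdot D = \chan_1\compd\chan_2$.

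The hard part is the last equality in the displayed line: this is precisely where the disjointness hypothesis is needed, since it forces $\{y:\phi(y)=(y_1,y_2)\}$ to coincide with the full support of $\S(y_1,y_2)$ and hence to sum to $1$; without disjointness, the scheduler could place mass on interleavings shared by distinct source pairs and the deterministic collapse $D$ would lose information. To lift the equality of intrinsic leakage to observed leakage, I would observe that for any observer $\O$ on $\Int(\Y_1,\Y_2)$ the cascade $\S^{\ast}\cdot\O$ is a channel on $\Y_1\times\Y_2$ whose observed leakage on $\chan_1\compd\chan_2$ equals $L_{\O}(\pi,\comps{\S}(\chan_1,\chan_2))$, and symmetrically via $D\cdot\O'$, so the ordering inequality of Theorem~\ref{thm:ordering-observers} applies in both directions and collapses to equality.
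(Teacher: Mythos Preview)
Your proof is correct and follows essentially the same idea as the paper's: the disjointness hypothesis means each interleaved trace $y$ uniquely determines its source pair $(y_1,y_2)$, so the scheduled and parallel compositions carry identical information. The paper states this in two informal sentences, while you formalise it via the explicit post-processing channels $\S^{\ast}$ and $D$ witnessing mutual composition-refinement; the one minor quibble is that your appeal to Theorem~\ref{thm:ordering-observers} in the observer paragraph is unnecessary---the equalities you wrote already follow from associativity of cascade composition alone, since $\comps{\S}(\chan_1,\chan_2)\cdot\O = (\chan_1\compd\chan_2)\cdot(\S^{\ast}\cdot\O)$ and $(\chan_1\compd\chan_2)\cdot\O' = \comps{\S}(\chan_1,\chan_2)\cdot(D\cdot\O')$ directly.
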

\withproof{
\begin{proof}
Observe that for each $y\in\Int(\Y_1,\Y_2)$ there is a unique $(y_1, y_2) \in \Y_1\times\Y_2$ that yields the interleaved trace $y$.
That is, given any output $y$ of the scheduled composition, the observer uniquely identifies the component traces $(y_1, y_2)$.
Thus the leakage of the scheduled composition is equivalent to that of the parallel composition.
\hspace*{\fill} $\Box$
\end{proof}}

By adding a stronger requirement to Theorem~\ref{thm:no-shared-interleavings}, we obtain the following corollary.
\begin{corollary}
\label{cor:no-shared-actions}
Let $\chan_1=(\X_1,\Y_1,C_1)$ and $\chan_2=(\X_2,\Y_2,C_2)$ be channels.
Assume that, for all $(y_1, y_2)\in\Y_1\times\Y_2$, $\alpha\in y_1$ and $\beta\in y_2$, we have $\alpha\neq \beta$.
Then, for every scheduler $\S$ and observer $\O$,
the leakage of the scheduled composition is the same as that of the parallel composition.
\end{corollary}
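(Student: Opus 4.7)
The plan is to derive the corollary as an immediate consequence of Theorem \ref{thm:no-shared-interleavings}. To do this, I would show that the disjointness-of-actions hypothesis in the corollary implies the disjointness-of-interleavings hypothesis of that theorem, after which the conclusion transfers directly.

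Concretely, the key step is to fix $y_1, y'_1 \in \Y_1$ and $y_2, y'_2 \in \Y_2$ with some common trace $y \in \Int(y_1,y_2) \cap \Int(y'_1,y'_2)$, and argue that $y_1 = y'_1$ and $y_2 = y'_2$. By the corollary's hypothesis, the set of actions occurring in traces of $\Y_1$ is disjoint from the set of actions occurring in traces of $\Y_2$ (note that $\tau$ counts as an action per the grammar $\mu ::= \tau \mid \oap m v$, so this is indeed a strong requirement). Thus, for each position $i$ in $y$, the action $\mu_i$ belongs exclusively to one component; call this a $\Y_1$-position if $\mu_i$ appears in some trace of $\Y_1$, and a $\Y_2$-position otherwise. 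Because Definition \ref{def:interleaving} stipulates that an interleaving preserves the order of the component traces, the subsequence of $\Y_1$-positions of $y$ read in order must simultaneously equal $y_1$ (from $y \in \Int(y_1,y_2)$) and $y'_1$ (from $y \in \Int(y'_1,y'_2)$), so $y_1 = y'_1$; the symmetric argument gives $y_2 = y'_2$.

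With the hypothesis of Theorem \ref{thm:no-shared-interleavings} verified, its conclusion gives the equality of leakage for the scheduled composition and the parallel composition, for every scheduler $\S$ and observer $\O$, which is exactly the statement of the corollary.

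The only genuinely delicate point is the classification of each position of $y$ to a unique component. The hypothesis resolves this unambiguously because it quantifies over \emph{all} pairs $(y_1, y_2) \in \Y_1 \times \Y_2$ simultaneously, so the partition of the global alphabet into $\Y_1$-actions and $\Y_2$-actions is well defined and does not depend on which pair produced the witnessed interleaving. Beyond this observation, the argument is a straightforward reduction and requires no further machinery.
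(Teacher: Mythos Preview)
Your proposal is correct and follows essentially the same approach as the paper: reduce to Theorem~\ref{thm:no-shared-interleavings} by showing that the no-shared-actions hypothesis forces any common interleaving to uniquely determine its component traces. The paper's own proof is a terse one-liner asserting this reduction, whereas you have spelled out the position-by-position argument in more detail (including the observation that the partition of actions is globally well defined), but the underlying idea is identical.
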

\withproof{
\begin{proof}
Since there are no actions shared between the traces of $\Y_1$ and $\Y_2$, 
we have that, for any $y_1,y'_1\in\Y_1$ and $y_2,y'_2\in\Y_2$,
if $\Int(y_1,y_2)\cap \Int(y'_1,y'_2)\neq\emptyset$
then $y_1=y'_1$ and $y_2=y'_2$.
By Theorem~\ref{thm:no-shared-interleavings} the claim follows.
\hspace*{\fill} $\Box$
\end{proof}}

\subsection{Schedulers for Altering Information Leakage}
\label{ssec:alter-leakage}

This section considers when schedulers can alter the leakage of a scheduled composition.
This is distinct from prior results where it has been shown that the composition cannot leak more information than the component channels~\cite{barthe2011information,espinoza2013min,KawamotoCP14:qest},
since here more information can be leaked to imperfect observers.

In general scheduled composition can yield more or less leakage than the individual component channels or
their parallel composition. This is illustrated by
Example~\ref{ex:observer-depend}. Unfortunately heuristics for determining 
when more information is leaked end up being rather complicated and dependent on  many
relations between traces, interleavings, equivalences, and then subject to generalities about
both schedulers and observers.
Ultimately it is easier to show by examples that, for some channels, prior, and $\sim$-observer, 
there is a scheduler by which the scheduled composition leaks strictly more information than the parallel composition.
Since this clearly holds by example, 
we consider a class of schedulers
under which the scheduled composition does not leak more information than the parallel composition.

To define this we extend an equivalence relation $\sim$ on traces to probability distributions of traces:
We say that two distributions $D$ and $D'$ on a set $\Y$ are \emph{$\sim$-indistinguishable} (written as $D \sim D'$) if the deterministic $\sim$-observer cannot distinguish $D$ from $D'$ at all, i.e., 
for all equivalence class $t \in \Y/\!\sim$,\, we have $\sum_{y \in t} D[y] = \sum_{y \in t} D'[y]$.
Using $\sim$-indistinguishability we define a scheduler that does not leak any behaviour of the system that the $\sim$-observer cannot detect.

\begin{definition}\rm \label{def:sim-blind-sc}
Let $\sim$ be an equivalence relation on $\Y_1\cup\Y_2\cup\Int(\Y_1,\Y_2)$.
A scheduler $\S$ on $\Y_1$ and $\Y_2$ is a \emph{$\sim$-blind scheduler} when,
for any two pairs $(y_1,y_2), (y'_1,y'_2) \in \Y_1\times\Y_2$,\,
we have $y_1 \sim y'_1$ and $y_2 \sim y'_2$ iff we have $\S(y_1,y_2) \sim \S(y'_1,y'_2)$.
\end{definition}
For instance, the deterministic sequential scheduler $\SDS$ and the fair sequential scheduler $\SFS$ are $\sim_w$-blind while the fair interleaving scheduler $\SFI$ is not.
Note that $\sim$-blind schedulers 
do not leak any behaviour that would not be visible to the deterministic $\sim$-observers.
Thus they do not gain more information from the scheduled composition w.r.t.
$\sim$ than the parallel composition.

\begin{theorem} \label{thm:sim-blind-sc}
Let $\pi$ be a prior, $\chan_1$ and $\chan_2$ be two channels, $\O$ be a deterministic  $\sim$-observer, and $S$ be a $\sim$-blind scheduler.
For each $L \in \{ \I, \L \}$ we have $\displaystyle L_\O(\pi, \allowbreak\comps \S(\chan_1,\allowbreak\chan_2)) \le L_\O(\pi, \chan_1\compd\chan_2)$.
For each $L \in \{ \C, \MC \}$ we have $\displaystyle L_\O(\comps{\S}(\chan_1,\allowbreak\chan_2)) \le L_\O(\chan_1\compd\chan_2)$.
When $\S$ is also deterministic, the leakage relations become equalities.
\end{theorem}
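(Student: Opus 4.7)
My plan is to exhibit a common \emph{quotient channel} $\chan_q$ through which both observed channels factor, and then to argue that $\comps{\S}(\chan_1,\chan_2)\cdot\O$ is a cascade post-processing of $(\chan_1\compd\chan_2)\cdot\O$. The four inequalities then follow from the composition-refinement / data-processing machinery already invoked in the proof of Theorem~\ref{thm:ordering-observers}.

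Concretely, I would define
\[
\chan_q[(x_1,x_2),(t_1,t_2)] \;=\; \Bigl(\sum_{y_1\in t_1}C_1[x_1,y_1]\Bigr)\Bigl(\sum_{y_2\in t_2}C_2[x_2,y_2]\Bigr)
\]
as a channel from $\X_1\times\X_2$ to $(\Y_1/\!\sim)\times(\Y_2/\!\sim)$. Unfolding $\comps{\S}(\chan_1,\chan_2)\cdot\O$ and isolating the factor $f(y_1,y_2,z) := \sum_y \S(y_1,y_2)[y]\,\Obs[y,z]$, the central step is to show that $f$ descends to a well-defined kernel $Q[(t_1,t_2),z]$. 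This uses both hypotheses simultaneously: since $\O$ is a deterministic $\sim$-observer, $\Obs[y,z]$ depends only on $[y]_\sim$; and since $\S$ is $\sim$-blind, the $\sim$-class distribution of $\S(y_1,y_2)$ depends only on $([y_1]_\sim,[y_2]_\sim)$. Rearranging the double sum then yields $\comps{\S}(\chan_1,\chan_2)\cdot\O = \chan_q\cdot Q$. The same argument applied to the separator-based scheduler that realises the parallel composition (which is trivially $\sim$-blind, since $\sep$ prevents conflation) gives $(\chan_1\compd\chan_2)\cdot\O = \chan_q\cdot Q'$ for some $Q'$.

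The key geometric observation is that $Q'$ is an injective deterministic channel: because the parallel composition keeps $y_1$ and $y_2$ textually separated by $\sep$, the $\sim$-observer reads off $[y_1]_\sim$ and $[y_2]_\sim$ independently, so distinct input pairs $(t_1,t_2)$ map to distinct point-mass views. Hence $Q'$ admits a left-inverse channel $Q'^{\dagger}$, and setting $R := Q'^{\dagger}\cdot Q$ gives
\[
\comps{\S}(\chan_1,\chan_2)\cdot\O \;=\; \chan_q\cdot Q \;=\; \chan_q\cdot Q'\cdot Q'^{\dagger}\cdot Q \;=\; (\chan_1\compd\chan_2)\cdot\O\cdot R.
\]
Thus $\comps{\S}(\chan_1,\chan_2)\cdot\O \sqsubseteq_\circ (\chan_1\compd\chan_2)\cdot\O$, and the desired inequalities for $\I,\L,\C,\MC$ follow from the data-processing inequality together with the cascade result of~\cite{Espinoza:11:FAST}, exactly as in Theorem~\ref{thm:ordering-observers}. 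For the equality when $\S$ is additionally deterministic, $\S(y_1,y_2)$ is a point mass, and $\sim$-blindness forces the induced map $(t_1,t_2)\mapsto[\S(y_1,y_2)]_\sim$ to be injective, so $Q$ is itself an injective deterministic channel; then $\chan_q\cdot Q$ is merely a relabeling of the columns of $\chan_q$, which preserves every leakage measure, and the inequalities collapse to equalities.

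The main obstacle I expect is justifying the injectivity of $Q'$ rigorously, i.e., checking that the $\sim$-observer on a separator-concatenated trace really does recover the pair of component $\sim$-classes. If $\sim$ on $\Int(\Y_1,\Y_2)$ is permitted to conflate traces that the component-wise restrictions of $\sim$ distinguish, this step can fail, and a small preliminary lemma---or a mild tightening of the hypothesis on how $\sim$ interacts with $\sep$---will be needed to guarantee that the separator action genuinely splits the two $\sim$-classes before the remainder of the argument goes through.
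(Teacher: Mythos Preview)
Your argument is correct and lands on the same composition-refinement conclusion, but the paper reaches it in three lines by working one level of abstraction higher. Instead of building the quotient channel $\chan_q$ and inverting $Q'$ explicitly, the paper observes that the cascade $\S\cdot\O$ (scheduler followed by observer, regarded as a single channel $\Y_1\times\Y_2\to\Z$) is itself a $\sim$-observer---this is exactly what $\sim$-blindness of $\S$ together with $\O$ being a $\sim$-observer gives---and then applies Proposition~\ref{prop:ordering-deterministic} directly: since $\O$ is the \emph{deterministic} $\sim$-observer, every $\sim$-observer is composition-refined by it, so $\S\cdot\O \sqsubseteq_\circ \O$, and Theorem~\ref{thm:ordering-observers} delivers the inequalities. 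Your quotient-and-left-inverse construction is effectively an in-place re-proof of Proposition~\ref{prop:ordering-deterministic} specialised to this situation; it works, but duplicates machinery the paper already has. For the equality clause the paper again stays abstract: if $\S$ is deterministic then $\S\cdot\O$ is a \emph{deterministic} $\sim$-observer, and Proposition~\ref{prop:unique-det-sim-obs} says all deterministic $\sim$-observers yield identical leakage. Your closing caveat---that one must verify the observer on separator-separated traces genuinely recovers the pair $([y_1]_\sim,[y_2]_\sim)$---is well taken; the paper's proof silently assumes this when it treats $\O$ on the parallel-composition side as \emph{the} deterministic $\sim$-observer for the product equivalence, so your more explicit route has the merit of surfacing a hypothesis the short proof leaves implicit.
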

\withproof{
\begin{proof}
By Definition~\ref{def:sim-blind-sc}, the cascade $\S\cdot \O$ is also a $\sim$-observer.
Since $\O$ is the deterministic $\sim$-observer, $\S\cdot \O \sqsubseteq_\circ \O$ follows from Proposition~\ref{prop:ordering-deterministic}.
Hence the leakage of the scheduled composition w.r.t. $\sim$ is upper-bounded by that of the parallel composition.

When $\S$ is also deterministic, then the cascade $\S \O$ is a deterministic observer.
By Proposition~\ref{prop:unique-det-sim-obs} the leakages under the observers $\S \O$ and $\O$ are equivalent.
\hspace*{\fill} $\Box$
\end{proof}
}

For instance, since $\SDS$ and $\SFS$ are $\sim_w$-blind schedulers, the deterministic $\sim_w$-observers do not gain more information from the scheduled composition w.r.t. $\sim_w$ than the parallel composition.
In fact, they have the same leakage in Example~\ref{eg:ordering-observers}.

The following result is about a heuristic for when leakage can be changed by the
properties of the scheduler.
This is presented here to clarify the properties.

\begin{theorem} \label{thm:shed-alter}
Let $\chan_1=(\X_1,\Y_1,C_1)$ and $\chan_2=(\X_2,\Y_2,C_2)$ be two channels.
Assume that there exist $y_1,y'_1\in \Y_1$ and $y_2,y'_2\in\Y_2$ such that
$\Int(y_1,y_2)\cap\Int(y'_1,y'_2)\neq\emptyset$.
Then it is possible for the scheduled-composition of $\chan_1$ and $\chan_2$
to alter the mutual information and min-entropy leakage for some prior.
\end{theorem}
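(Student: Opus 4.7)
The plan is to exhibit, for the given channels, a concrete scheduler and prior that make the scheduled composition strictly less informative than the parallel composition, under the perfect observer, for both leakage measures. I focus on the nontrivial case where the two witnessing pairs $(y_1,y_2)$ and $(y'_1,y'_2)$ are distinct; if they coincide the hypothesis is vacuous and there is nothing to show.

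First, I would fix a common interleaving $y^* \in \Int(y_1,y_2)\cap\Int(y'_1,y'_2)$ guaranteed by the hypothesis, and define a scheduler $\S$ by setting $\S(y_1,y_2)[y^*] = \S(y'_1,y'_2)[y^*] = 1$ while letting $\S$ coincide with $\SDS$ on every other pair. This is a well-defined scheduler by Definition~\ref{def:scheduler}, because $y^*$ lies in the interleaving sets of both pairs. Its effect is to deliberately collapse the two distinct pairs of component outputs into a single observed trace, while leaving every other composed output untouched.

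Next, assuming the standard convention that every element of $\Y_i$ is achieved with positive probability under some secret, I would pick distinct pairs $(x_1,x_2) \neq (x'_1,x'_2)$ in $\X_1\times\X_2$ with $C_1[x_1,y_1]\,C_2[x_2,y_2] > 0$ and $C_1[x'_1,y'_1]\,C_2[x'_2,y'_2] > 0$, and take $\pi$ to be the uniform prior supported only on these two secret pairs. Under this prior, the parallel composition observed by the perfect observer separates $(y_1,y_2)$ from $(y'_1,y'_2)$, and the argmax secret at $(y_1,y_2)$ is $(x_1,x_2)$ while the argmax at $(y'_1,y'_2)$ is $(x'_1,x'_2)$. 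The scheduled composition, viewed via Definition~\ref{def:composition}, factors as $\chan_1 \compd \chan_2$ post-processed by $\S$, so by the data-processing inequality for mutual information and by the cascade monotonicity cited in Proposition~\ref{prop:obs-max-min} for min-entropy leakage, both measures are weakly smaller than for the parallel composition.

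The main obstacle is establishing the strictness of the alteration. For min-entropy leakage, the merging replaces the two summands $\max_{(x,x')} \pi[(x,x')]\,C_1[x,y_1]C_2[x',y_2]$ and $\max_{(x,x')}\pi[(x,x')]\,C_1[x,y'_1]C_2[x',y'_2]$ by a single $\max_{(x,x')}$ of their sum; since the argmax pairs differ between the two outputs under our chosen $\pi$, the standard inequality $\max(a+b) \le \max a + \max b$ is strict, yielding a strict drop in posterior vulnerability and hence in $\L(\pi,\cdot)$. For mutual information, strictness follows because the two merged outputs have distinct posterior distributions on $\X_1\times\X_2$ (one concentrated near $(x_1,x_2)$, the other near $(x'_1,x'_2)$), so merging them strictly increases the conditional entropy $H(X\mid Y)$ and thus strictly decreases $\I(\pi,\cdot)$. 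Together these give strict alteration of both mutual information and min-entropy leakage, completing the argument.
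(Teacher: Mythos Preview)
Your approach is the same as the paper's: exhibit a scheduler that sends both $(y_1,y_2)$ and $(y'_1,y'_2)$ to a common interleaved trace $y^*$, thereby collapsing two columns of the parallel-composition matrix into one. The paper's own argument is much terser --- it treats only the subcase $y_1\neq y'_1$, $y_2=y'_2$ and then simply asserts that ``this creates an inequality on the information leakage'' --- so your version, which spells out the data-processing step and attempts to justify strictness, is in fact more complete in spirit than what appears in the paper.

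That said, there is a gap in your strictness step that you should be aware of (and that the paper's sketch never reaches). You assume you can pick \emph{distinct} input pairs $(x_1,x_2)\neq(x'_1,x'_2)$ with positive mass on $(y_1,y_2)$ and $(y'_1,y'_2)$ respectively, and that under the uniform two-point prior the argmax inputs at those two outputs differ. Neither is guaranteed by the reachability convention alone: it may be that $y_1,y'_1$ are both reachable only from the same $x_1$ and $y_2,y'_2$ only from the same $x_2$, in which case the two merged columns carry identical posteriors and the merge changes nothing; and even when distinct input pairs exist, nothing forces $C_1[x_1,y_1]C_2[x_2,y_2] > C_1[x'_1,y_1]C_2[x'_2,y_2]$, so the argmaxes may coincide and the $\max(a+b)<\max a+\max b$ inequality need not be strict. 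To make the argument airtight you would need either a mild extra hypothesis (that the columns $(y_1,y_2)$ and $(y'_1,y'_2)$ of $C_1\times C_2$ are not scalar multiples of one another) or a more careful, non-uniform choice of prior. This is not a divergence from the paper --- it is an edge case that neither proof resolves.
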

\withproof{
\begin{proof}
By assumption, we have  either $y_1\neq y'_1$ or $y_2\neq y'_2$.
Consider when $y_1\neq y'_1$ and $y_2=y'_2$. We have that there exists
$y\in\Int(y_1,y_2)\cap\Int(y'_1,y_2)$ and so any scheduler $\S$ where
$\S(y_1,y_2)[y] > 0$ and $\S(y'_1,y_2)[y]>0$ will yield a matrix where
observable $y$ does not uniquely identify $y_1$ or $y_1'$, yet each row of the matrix
for $\Y_1\times \Y_2$ does uniquely identify each $y_1\in\Y_1$.
Then conclude by observing that this creates an inequality on the information
leakage of $\comps \S (\chan_1,\chan_2)$ compared to $(\chan_1\compd\chan_2)$.
\hspace*{\fill} $\Box$
\end{proof}
}

\subsection{Schedulers for Minimising Information Leakage}
\label{ssec:algorithms}

This section presents results for finding a scheduler that minimises the min-entropy leakage and min-capacity in the presence of any observer.

\begin{theorem}
\label{thm:sc-for-minimise-MEL}
Given any prior $\pi$, two channels $\chan_1$, $\chan_2$ and any observer $\O$,
there is an algorithm that computes a scheduler $\S$ that minimises the observed min-entropy leakage $\L_\O(\pi, \comps{\S}(\chan_1, \chan_2))$ of the scheduled composition.
\end{theorem}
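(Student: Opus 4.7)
The plan is to reduce the problem to a linear program whose feasible region is exactly the set of all schedulers on $\Y_1$ and $\Y_2$. Since $V(\pi)$ does not depend on $\S$ and $\log$ is monotone, by Definition~\ref{def:MEL} minimising $\L_\O(\pi, \comps{\S}(\chan_1, \chan_2))$ over $\S$ is equivalent to minimising the posterior vulnerability
\[
V(\pi, \comps{\S}(\chan_1, \chan_2)\cdot\O)
\,=\, \sum_{z \in \Z} \max_{(x_1,x_2)} \pi[(x_1,x_2)] \!\!\sum_{y_1, y_2, y}\!\! C_1[x_1, y_1]\, C_2[x_2, y_2]\, \S(y_1, y_2)[y]\, \Obs[y, z],
\]
which is linear in the unknowns $\S(y_1, y_2)[y]$ aside from the outer maxima over secret pairs.

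I would take as decision variables $s_{y_1, y_2, y} \define \S(y_1, y_2)[y]$ for $y_1 \in \Y_1$, $y_2 \in \Y_2$ and $y \in \Int(y_1, y_2)$. The constraints $s_{y_1, y_2, y} \ge 0$ together with $\sum_{y \in \Int(y_1, y_2)} s_{y_1, y_2, y} = 1$ for each $(y_1, y_2)$ describe exactly the polytope of schedulers on $\Y_1$ and $\Y_2$. To eliminate the remaining non-linearity I introduce auxiliary variables $v_z$ for each $z \in \Z$ subject to
\[
v_z \,\ge\, \pi[(x_1,x_2)]\!\!\sum_{y_1, y_2, y}\!\! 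C_1[x_1, y_1]\, C_2[x_2, y_2]\, s_{y_1, y_2, y}\, \Obs[y, z]
\]
for every $(x_1, x_2) \in \X_1 \times \X_2$ and $z \in \Z$, and minimise $\sum_{z \in \Z} v_z$. Because each $v_z$ appears only on the left-hand side of inequalities and with positive coefficient in the objective, any optimum saturates the tightest of these lower bounds; hence $v_z$ equals the inner maximum at optimality, and the LP has the same optimal value as the original minimisation.

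The algorithm therefore consists of assembling this LP from the data $\pi, \chan_1, \chan_2, \O$ and invoking any standard LP solver (the authors' implementation uses \lpsolve{}); the optimal $s_{y_1, y_2, y}$ values directly define a minimising scheduler. Correctness reduces to the standard epigraph linearisation of a sum of maxima of linear functions, which I expect to be the only subtle step. A practical concern, rather than a mathematical obstacle, is that the number of decision variables equals $\sum_{y_1, y_2} |\Int(y_1, y_2)|$, which can grow quickly with the lengths of traces in $\Y_1$ and $\Y_2$; nevertheless, the algorithm is correct and terminates for any finite input.
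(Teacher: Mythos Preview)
Your proposal is correct and follows essentially the same approach as the paper: reduce to minimising the posterior vulnerability, take the scheduler entries as decision variables with the probability-simplex constraints, and linearise the sum of maxima via auxiliary epigraph variables $v_z$, yielding an LP solvable by standard methods. If anything, your write-up is slightly more careful, since you include the non-negativity constraints explicitly and justify why the epigraph inequalities are tight at optimality, whereas the paper leaves these points implicit.
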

\begin{proof}
To find a scheduler $\S$ that minimises the observed min-entropy leakage $\L_\O(\pi, \allowbreak \comps{\S}(\chan_1, \chan_2))$,
it is sufficient to find $\S$ that minimises the observed posterior vulnerability $V(\pi, \comps{\S}(\chan_1, \chan_2)\cdot \O)$.

For $(x_1, x_2) \in \X_1 \times \X_2$ and $(y_1, y_2) \in \Y_1 \times \Y_2$,
let $p(x_1, x_2, y_1, y_2) = \pi[x_1, x_2] (C_1\times C_2)[(x_1.x_2), (y_1,y_2)]$.
For each $z \in \Z$ let
$
v_z = 
\hspace{-0.1cm}
\max_{(x_1, x_2)\in \X_1\times\X_2}
\sum_{
y_1, y_2, y
}\hspace{-0.0cm}
p(x_1, x_2, y_1, y_2) \S(y_1,y_2)[y] \Obs[y,z]
%
$
where $(y_1, y_2)$ and $y$ range over $\Y_1\times\Y_2$ and $\Int(\Y_1,\Y_2)$ respectively.
Let $Pos(y_1, y_2)[y]$ be the $(|\Y_1|\times|\Y_2|, |\Int(\Y_1,\Y_2)|)$-matrix defined by the following:
$Pos(y_1, y_2)[y] = 1$ if $y$ can be obtained by interleaving $y_1$ and $y_2$, and
$Pos(y_1, y_2)[y] = 0$ otherwise.

To find a scheduler matrix $\S$ that minimises the observed posterior vulnerability, it suffices to solve the linear program that minimises $\sum_{z \in \Z} v_z$,
subject to
\begin{itemize}
\item for each $(x_1, x_2, z)\!\in\!\X_1\!\times\!\X_2\!\times\!\Z$,
$
\sum_{
y_1, y_2, y
}\,
p(x_1, x_2, y_1, y_2) \S\!(y_1,y_2)[y] \Obs[y,z]\!\le~v_z
$
\vspace{-0.1cm}
\item for each $(y_1, y_2) \in \Y_1 \times \Y_2$,\,
$
\sum_{y
}\,
Pos(y_1, y_2)[y] \S(y_1,y_2)[y]
= 1.
$
\end{itemize}
Note that the second constraint means that each row of the scheduler matrix $\S$ must sum to $1$.
In this linear program, the scheduler matrix element $\S(y_1, y_2)[y]$ for each $(y_1,y_2) \in \Y_1\times\Y_2$ and $y\in \Int(\Y_1,\Y_2)$ and $v_z$ for each $z \in \Z$ are variables.
We can solve this problem using the simplex method or interior point method.
(In practice, we can efficiently solve it using a linear programming solver such as \lpsolve{}~\cite{lpsolve}.)
Hence we obtain a scheduler matrix $\S$ that minimises $\sum_{z \in \Z} v_z$.
\end{proof}

In the above linear program the number of variables is $|\Y_1|\times|\Y_2|\times|\Int(\Y_1,\Y_2)| + |\Z|$, and the number of constraints is $|\X_1|\times|\X_2|\times|\Z| + |\Y_1|\times|\Y_2|$.
Since the number of interleaved traces grows exponentially in the number of traces, the time to compute a minimising scheduler is exponential in the number of component traces.
When the observer $\O$ is imperfect enough for $|\Z|$ to be very small, then the computation time improves significantly in practice.
On the other hand, when the number of traces is very large, we may heuristically obtain a scheduler with less leakage by results in the previous section.

To obtain a scheduler that minimises the worst-case leakage value, it suffices to consider a scheduler that minimises the min-capacity.

\begin{corollary}
\label{cor:sc-for-minimise-MC}
Given two channels $\chan_1$, $\chan_2$ and any observer $\O$,
there is an algorithm that computes a scheduler $\S$ that minimises the observed min-capacity 
of the scheduled composition.
\end{corollary}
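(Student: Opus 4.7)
The plan is to reduce the problem to Theorem~\ref{thm:sc-for-minimise-MEL} by exploiting the standard fact that the min-capacity of any channel is attained at the uniform prior on its input alphabet. Concretely, for a channel $\chan = (\X, \Y, C)$ let $\pi_u$ denote the uniform prior on $\X$. Then direct computation yields $\L(\pi_u, \chan) = \log \sum_y \max_x C[x, y]$, while the elementary bound $\max_x \pi[x] C[x,y] \le V(\pi)\,\max_x C[x,y]$ summed over $y$ gives $V(\pi, \chan) \le V(\pi) \sum_y \max_x C[x,y]$, hence $\L(\pi, \chan) \le \L(\pi_u, \chan)$ for every prior $\pi$. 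Therefore $\MC(\chan) = \L(\pi_u, \chan)$.

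First, I would observe that this identity applies uniformly to every candidate channel of the form $\comps{\S}(\chan_1, \chan_2) \cdot \O$, since the input alphabet $\X_1 \times \X_2$ is the same for all schedulers $\S$, so the same $\pi_u$ serves for all of them. Consequently,
\[
\min_{\S}\,\MC_{\O}\bigl( \comps{\S}(\chan_1, \chan_2) \bigr)
~=~
\min_{\S}\,\L_{\O}\bigl(\pi_u,\, \comps{\S}(\chan_1, \chan_2)\bigr).
\]
Second, I would invoke Theorem~\ref{thm:sc-for-minimise-MEL} with the prior set to $\pi_u$. Its proof provides a linear program whose solution is a scheduler $\S^*$ minimising the right-hand side, and by the displayed equality this $\S^*$ simultaneously minimises the observed min-capacity of the scheduled composition. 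This gives the claimed algorithm.

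The main conceptual point, and really the only obstacle, is recognising that the prior maximising min-entropy leakage can be chosen independently of the channel (namely, uniform). This decouples the scheduler minimisation from the prior maximisation hidden inside the definition of $\MC$, which is what allows Theorem~\ref{thm:sc-for-minimise-MEL} to be reused verbatim rather than requiring a new linear program with a \textit{minimax} structure over both schedulers and priors. No additional optimisation machinery is needed.
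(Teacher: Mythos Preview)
Your proposal is correct and follows essentially the same route as the paper: the paper's proof simply states that the min-capacity is attained at the uniform prior and then invokes Theorem~\ref{thm:sc-for-minimise-MEL} with that prior. You additionally spell out the standard argument for why the uniform prior realises the min-capacity and make explicit that the input alphabet is scheduler-independent, but the underlying idea is identical.
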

\withproof{
\begin{proof}
The min-capacity is obtained when the prior $\pi$ is uniform.
Therefore we obtain a minimizing scheduler $\S$ by the algorithm in Theorem~\ref{thm:sc-for-minimise-MEL} using the uniform prior $\pi$.
\end{proof}}

These two results give the minimum amount of leakage that is possible for any scheduling.

\begin{example}\label{eg:minimise-leak}
Consider the channels $\chan_1, \chan_2$ defined in Section~\ref{ssec:eg-deterministic-schedulers}.
By Theorem~\ref{thm:sc-for-minimise-MEL}, the minimum observed min-entropy leakage w.r.t. the prior $(0.15, 0.20, 0.30, \allowbreak 0.35)$ is $1.237$ under the deterministic $\sim_s$-observer, and $0.801$ under the probabilistic observer defined in Example~\ref{eg:probabilistic-observer}.
By Corollary~\ref{cor:sc-for-minimise-MC}, the minimum observed min-capacity is $1.585$ under the deterministic $\sim_s$-observer, and $1.138$ under the probabilistic observer.
\end{example}

Since the channel capacity will not exceed the min-capacity~\cite{smith:qest11}, 
the minimum observed min-capacity obtained by the above algorithm gives an upper bound on the minimum channel capacity.

\hacks{\vspace{-0.4cm}}
\newcommand{\votezero}{\oap m 0}
\newcommand{\voteone}{\oap m 1}

\section{Case Studies}
\label{sec:eval}

\hacks{\vspace{-0.2cm}}
\subsection{Sender Anonymity}
\label{ssec:vote}

In e-voting \emph{sender anonymity} can be summarised as the issue of collecting votes from a number of voters
and being able to expose the aggregate vote information while revealing as little as possible
about how each voter voted.
This can be solved by a general application of a mix network~\cite{Chaum:81}
where all the votes are sent via mixing systems that output the votes in a manner that
should not reveal how each voter voted.

This can be represented here by each voter being an information-theoretic channel that
outputs their vote.
For example, consider a simple voting in which possible votes are $0$ and $1$ and each voter outputs the chosen vote via $\votezero$ or $\voteone$, respectively.
Then each voter (indexed by $i$) can be represented by the channel
$\chan_i=(\{0,1\},\{\votezero,\voteone\},\allowbreak C_i)$ where
$C_i[k,\oap m k]=1$ for $k\in\{0,1\}$ and each voter has a prior $\pi_i$ on $\{0,1\}$.
Observe that each such voter channel alone fully reveals the prior for the channel.

The scheduled composition of the voters represents the mix network with the schedulers
representing the mixing algorithm and thus providing the ability to reason over their
effect on information leakage.
Consider the following problem with five voters $\chan_1$ to $\chan_5$.
As illustrated in Figure~\ref{fig:voters},
the ballot of each voter is sent via intermediate severs (schedulers) $\chan_{A}$, $\chan_{B}$, $\chan_{S1}$ that mix the order of ballots.
The final system $\chan_{S2}$ combines $\chan_{S1}$ and $\chan_B$ to output all the votes
according to some mixing.

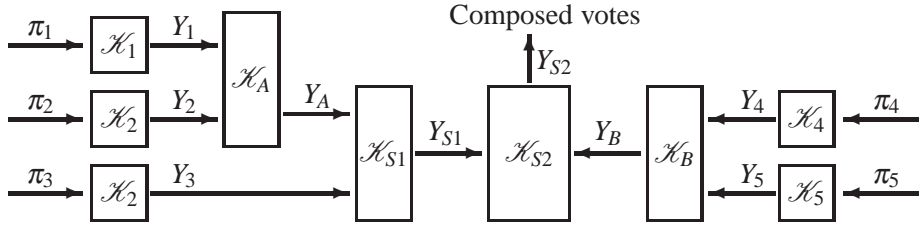
\begin{figure*}[t]
\begin{center}
\begin{picture}(350, 55)
 \thicklines \thicklines
 \put( 35, 46){\framebox(20,20){$\chan_1$}}
 \put( 35, 18){\framebox(20,20){$\chan_2$}}
 \put( 35,-10){\framebox(20,20){$\chan_2$}}
 \put(295, 18){\framebox(20,20){$\chan_4$}}
 \put(295,-10){\framebox(20,20){$\chan_5$}}
 \put(85, 18){\framebox(20,50){$\chan_A$}}
 \put(135,-10){\framebox(20,50){$\chan_{S1}$}}
 \put(245,-10){\framebox(20,50){$\chan_{B}$}}
 \put(185,-10){\framebox(30,50){$\chan_{S2}$}}

 \linethickness{1.4pt}
 \put(  3, 56){\vector(  1,  0){28}}
 \put(  3, 28){\vector(  1,  0){28}}
 \put(  3, 00){\vector(  1,  0){28}}
 \put( 57, 56){\vector(  1,  0){25}}
 \put( 57, 28){\vector(  1,  0){25}}
 \put( 57, 00){\vector(  1,  0){75}}
 \put(107, 30){\vector(  1,  0){25}}
 \put(347, 28){\vector( -1,  0){28}}
 \put(347, 00){\vector( -1,  0){28}}
 \put(293, 28){\vector( -1,  0){25}}
 \put(293, 00){\vector( -1,  0){25}} 
%
 \put(200, 42){\vector( 0,  1){18}} 
 \put(203, 47){$Y_{S2}$}
 \put(170, 64){Composed votes}
 \put(157, 15){\vector(  1,  0){25}}
 \put(242, 15){\vector( -1,  0){25}}
 \thinlines \thinlines
%
 \put( 10, 60){$\pi_1$}
 \put( 10, 32){$\pi_2$}
 \put( 10,  4){$\pi_3$}
 \put(330, 32){$\pi_4$}
 \put(330,  4){$\pi_5$}
 \put( 65, 60){$Y_1$}
 \put( 65, 32){$Y_2$}
 \put( 65,  4){$Y_3$}
 \put(115, 34){$Y_A$}
 \put(280, 32){$Y_4$}
 \put(280,  4){$Y_5$}
 \put(162, 20){$Y_{S1}$}
 \put(225, 20){$Y_{B}$}
\end{picture}
\caption{Structure of composed channels for voters}
\label{fig:voters}
\hacks{\vspace{-0.5cm}}~
\end{center}
\end{figure*}
Using the deterministic sequential scheduler $\SDS$ for all compositions reveals all information on how each voter voted.
That is, the leakage is considered to be 5-bits (as each vote is 0 or 1).
On the other hand,
using the fair sequential scheduler $\SFS$ for all compositions leaks less information than $\SDS$.
When $\pi$ is uniform and $\chan$ is the composed channel in Figure~\ref{fig:voters} with the appropriate scheduling, we obtain 
$\L(\pi,\chan) = 3.426$ and $\I(\pi,\chan) = 2.836$.
Observe that
here
the third voter's output can only appear in the
1st, 3rd, or 5th position in the final trace. 
This is repaired by using the fair interleaving scheduler $\SFI$ for all compositions
that leaks even less information:
$\L(\pi,\chan) = 2.901$ and $\I(\pi,\chan) = 2.251$.

A more interesting case is when different compositions use different schedulers.
Since the votes do not contain any information about the system they came from, let alone voter.
Using the fair sequential scheduler for $\chan_A$ and $\chan_B$,
and the fair interleaving scheduler for $\chan_{S2}$, along with a
specially constructed scheduler for $\chan_{S1}$ can reduce the information leakage to a minimum.
Then the min-entropy leakage is $2.824$ and the mutual information is $2.234$.
Note that when there is only one scheduler that receives all 5 ballots, the minimum min-capacity of the composed system (over all possible schedulers) is $2.585$.

The example can be extended further by adding $\tau$ steps before votes to indicate time taken
for some parts of the process.
For a simple example, consider when voters 1 and 2 have a $\tau$ step before their vote to represent
the time taken, e.g.~as indicative of voting order, or the time taken for the extra mixing step.
In the presence of all fair interleaving schedulers, the observed min-entropy leakage and the mutual information are respectively $3.441$ and $2.785$ under the perfect observer.
However, these shift to $3.381$ and $2.597$, respectively, under the deterministic $\sim_w$-observer.

\subsection{Side-Channel Attacks}
\label{ssec:side}

Consider the small program shown in Figure~\ref{fig:program},
where an observable action is repeated in a loop.
This program captures, for instance, some aspects of decryption algorithms of certain cryptographic schemes, such as RSA.
Intuitively, $\tt X[~]$ is the binary array representing a 3-bit secret (e.g. {\tt 011}), which corresponds to secret decryption keys.
The timing of the algorithm's operation reveals which bit of the secret key is $1$,
since the observable-operation $\oap m 1$ can be detected, perhaps as power consumption, response time, or some other side-effect of the algorithm~\cite{kocher1996timing}.
We denote by $\chan$ the channel defined by this program.

Consider composition of $\chan$ with itself, e.g., when applying the algorithm
to different parts of the message in parallel.
Clearly if the parallel composition is taken then both instances of $\chan$ will
leak all their information about the key.
On the other hand, the scheduled composition may have less leakage.

We first consider the case each instance of the component channel $\chan$ receives a different secret bit string independently drawn from the uniform prior.
This captures the situation in which each decryption operation uses different secret keys.
When the fair interleaving scheduler mixes the two traces, the min-entropy leakage and the mutual information are respectively $4.257$ and $3.547$ in the presence of the perfect observer, and $2.807$ and $2.333$ in the presence of the deterministic $\sim_w$-observer.

Next we consider the case where both instances of $\chan$ share the same secret key which has been drawn from the uniform prior.
When the fair interleaving scheduler mixes the two traces, the min-entropy leakage and the mutual information are respectively $3.000$ and $3.000$ (all 3 bits of the secret key are leaked)
under the perfect observer,
and $2.000$ and $1.811$ under the deterministic $\sim_w$-observer.

\begin{figure}[t]
\begin{center}
\begin{tabular}{lr}
\begin{minipage}{0.48\hsize}
\begin{flushleft}
\begin{footnotesize}
\begin{verbatim}
for(i = 0; i < 3; i++) {
   tau;
   if(X[i] = 1) {
     m<1>; //observable-operation
   }
}
\end{verbatim}
\vspace{-14.0pt}
\caption{Decryption algorithm~\hspace{-10pt}~}
\label{fig:program}
\end{footnotesize}
\end{flushleft}
\end{minipage}
\begin{minipage}{0.48\hsize}
\begin{center}
\begin{small}
\begin{tabular}{rr|ccc}
& \multicolumn{3}{c}{\qquad\qquad\quad view} \\[-4.5pt]
& & $\tau$ & $\oap m 1$ & $\emptyset$\\[-2.0pt] \cline{2-5}
& $\tau$ & 0.8 & 0.1 & 0.1\\[-4.5pt]
\raisebox{8pt}[0cm][0cm]{output~\hspace{-10.0pt}~} \vspace{1pt}
& $\oap m 1$ & 0.05 & 0.9 & 0.05
\end{tabular}
\vspace{-12.5pt}
~\qquad\qquad~\qquad~~~\caption{Probabilistic observer matrix}
\label{fig:prob-obs-mat}
\end{small}
\end{center}
\end{minipage}
\end{tabular}
\end{center}
\hacks{\vspace{-0.6cm}}
\end{figure}

More interesting is to consider the case where the observer is only able
to detect approximate information through the side-channel.
Consider the observer $\O$ that only probabilistically observes actions
according to the matrix in Figure~\ref{fig:prob-obs-mat}.
Here $\emptyset$ indicates that nothing is detected by the attacker not even a $\tau$.
For example, applying this observer to the trace $\tau.\tau.\tau$ may yield
$\tau.\tau$ when one $\tau$ is not observed (represented $\emptyset$ in the matrix).
Such an observer is less effective even when applied to the parallel composition
of channels. However, this applies even further when applied to any scheduled
composition since the loss of information through poor detection cannot even be
limited to one channel or the other. Thus, a trace of length 5, even from a
leaky scheduler such as the (left-first) sequential scheduler, would leak less
than the parallel composition (since it would be clear which composite channel
had been poorly observed).

For instance, let us consider the case each instance of $\chan$ independently receives a secret from the uniform prior and the fair interleaving scheduler is used.
Then the min-entropy leakage and the mutual information are respectively $3.306$ and $1.454$ under this probabilistic observer.
If we consider the case both instances of $\chan$ shares the same secret, then
the leakage values are respectively $2.556$ and $1.924$.

\hacks{\vspace{-0.4cm}}
\section{Related Work}
\label{sec:related}

%
Regarding schedulers there are a variety of studies on relationships between schedulers and information leakage
\cite{Chatzikokolakis07makingrandom,andres:2011:hal-00573447:1}.
In \cite{cgUCL-PLLKCSC08} 
the authors consider a \emph{task-scheduler}
that is similar to our schedulers, albeit restricted to the form of our deterministic scheduler.
The schedulers in this paper are also similar to the \emph{admissible schedulers} of \cite{andres:2011:hal-00573447:1}.
Both are defined to depend only upon the observable outputs, that is the traces they schedule.
This avoids the possibility of leakage via the scheduler being aware of the intended secret directly and so leaking information.
Differently to admissible schedulers, here the scheduler can be probabilistic, which is similar in concept to the probabilistically defined (deterministic)
schedulers of \cite{Zhang:2010:MCI:2175554.2175561}, although they explore scheduling and determinism of Markov Chains and not information leakage.

Most work on schedulers has focused on preventing any leakage at all, indeed the problem is
typically defined to prevent any high/secret information leaking. This in turn sets extremely
high requirements upon the scheduler, and so proves to be difficult to achieve, or even 
impossible.
Here we take an approach to scheduling that allows for probabilistic schedulers and so reasoning
about the quantitative information leakage, rather than total leakage.
Thus we permit schedulers that can be daemonic or angelic, as well as many in between that
may closer resemble the behaviour of real world systems.

%
Regarding observers there is little prior work in quantitative information flow and
quantifying the capability of the observer.
\cite{BMLW13} has some similarity where they formalise an equivalence of 
system states
similar in style to the deterministic $\sim$-observers here.
However, this does not model observers as part of information-theoretic channels, hence does not allow the probabilistic behaviour of observers.

\hacks{\vspace{-0.4cm}}
\section{Conclusions and Future Work}
\label{sec:conclude}

We have introduced the notion of the scheduled composition of channels and  generalised the capabilities of the observers
to reason about 
more systems.
Then we have presented theories that can be used as heuristics to detect when scheduled composition may have
an effect on the information leakage. This determines when scheduled composition
is a potential risk/benefit to a scheduler-dependent system.
Scheduling can both leak more information, or less information
to an observer depending on many factors,
while some leakage bounds can be obtained for schedule-composed channels.
Further, we have shown an algorithm for finding a scheduler
that minimises the leakage of the scheduled composition.

The work here provides a foundation for continuing research into concurrent behavior, including interactive systems.
Here we have limited the systems to finite sets of secrets and observables since this 
aligns with the discrete version of leakage calculations. By shifting to continuous domains we
can investigate some systems with infinite secrets or observables.
Similarly the schedulers here assume finite traces and are typically defined 
over the entire possible traces. However, many do not require this, and can be defined
only upon the next action in the trace. This allows for alternate definitions without changing
the results, and easier applicability to infinite settings.

\hacks{\vspace{-0.4cm}}
\bibliographystyle{eptcs}
\bibliography{bib,short}

\end{document}